\def\BibTeX{{\rm B\kern-.05em{\sc i\kern-.025em b}\kern-.08em
    T\kern-.1667em\lower.7ex\hbox{E}\kern-.125emX}}
\newcommand{\cPBFT}{PBFT\xspace}
\newcommand{\cLinearPBFT}{Linear-PBFT\xspace}
\newcommand{\cSBFTTagA}{Linear-PBFT+Fast Path\xspace}
\newcommand{\cSBFTCZero}{\sysname(c=0)\xspace}
\newcommand{\cSBFTCEight}{\sysname(c=8)\xspace}
\newcommand{\cSBFTCThree}{\sysname(c=3)\xspace}
\definecolor{blues1}{RGB}{198, 219, 239}
\definecolor{blues2}{RGB}{158, 202, 225}
\definecolor{blues3}{RGB}{107, 174, 214}
\definecolor{blues4}{RGB}{49, 130, 189}
\definecolor{blues5}{RGB}{8, 81, 156}
\definecolor{antiquefuchsia}{rgb}{0.57, 0.36, 0.51}
\definecolor{asparagus}{rgb}{0.53, 0.66, 0.42}
\definecolor{darkspringgreen}{rgb}{0.09, 0.45, 0.27}
\definecolor{darkslategray}{rgb}{0.18, 0.31, 0.31}
\definecolor{coralred}{rgb}{1.0, 0.25, 0.25}
\pgfplotsset{throughputGraphStyle/.style={%
		font=\tiny,
		width=0.29\paperwidth,
		xmin=4,xmax=256,
		ymin=0,
		enlargelimits=true,
		ymajorgrids=true,
		grid=major,
		grid style={dashed, gray!30},
		ylabel style={align=center},
		ylabel near ticks,
		xtick={4, 32,64, 128, 192, 256}, 
		scaled y ticks=true, 
		max space between ticks=10,
		enlargelimits=0.05,
		enlarge y limits==0.10,
}}
\pgfplotsset{LatencyGraphStyle/.style={%
		font=\tiny,
		width=0.29\paperwidth,
		enlargelimits=true,
		ymajorgrids=true,
		grid=major,
		grid style={dashed, gray!30},
		ylabel style={align=center},
		ylabel near ticks,
		scaled y ticks=false, 
}}
\pgfplotsset{plotSetA/.style={%
		legend columns=5,
		legend entries={\cPBFT, \cLinearPBFT, \cSBFTTagA, \cSBFTCZero, \cSBFTCEight },
		legend to name=plotSetALegened,
		legend style={font=\footnotesize, draw=none, text depth=0pt, legend columns=-1,},
}} 
\pgfplotsset{plotSetB/.style={%
		legend columns=5,
		legend entries={\cPBFT, \cLinearPBFT, \cSBFTTagA, \cSBFTCZero, \cSBFTCEight },
		legend to name=plotSetBLegened,
		legend style={font=\footnotesize, draw=none, text depth=0pt, legend columns=-1,},
}}
\pgfplotsset{plotSetC/.style={%
		legend columns=2,
		legend entries={\cSBFTCZero, \cSBFTCThree,  },
		legend to name=plotSetCLegened,
		legend style={font=\footnotesize, draw=none, text depth=0pt, legend columns=-1,},
}}
\newtheorem{theorem}{Theorem}[section]
\newtheorem{lemma}[theorem]{Lemma}
\newcommand\ls{\mathit{ls}}
\newcommand\window{\mathit{win}}
\newcommand\batch{\mathit{batch}}
\newcommand\CX{\mathit{LX}}
\newcommand\FX{\mathit{FX}}
\newcommand\sysname{SBFT\xspace}
\title{\sysname: a Scalable and Decentralized Trust Infrastructure}
\date{}
\author{Guy Golan Gueta (VMware Research) \and
              Ittai Abraham (VMware Research) \and
              Shelly~Grossman~(TAU) \and
              Dahlia~Malkhi~(VMware Research) \and
              Benny Pinkas~(BIU) \and
              Michael~K.~Reiter~(UNC-Chapel Hill) \and
              Dragos-Adrian~Seredinschi~(EPFL) \and
              Orr~Tamir~(TAU) \and
              Alin~Tomescu~(MIT)}{ }
\begin{document}
\maketitle

\begin{abstract}
\sysname  is a state of the art Byzantine fault tolerant permissioned blockchain system that addresses the challenges of scalability, decentralization and world-scale geo-replication. \sysname is optimized for decentralization and can easily handle more than 200 active replicas in a real world-scale deployment. 

We evaluate \sysname in a world-scale geo-replicated deployment with 209 replicas withstanding f=64 Byzantine failures. We provide experiments that show how the different algorithmic ingredients of \sysname increase its performance and scalability. The results show that \sysname simultaneously provides almost 2x better throughput and about 1.5x better latency relative to a highly optimized system that implements the PBFT protocol.

To achieve this performance improvement, \sysname uses a combination of four ingredients: using collectors and threshold signatures to reduce communication to linear, using an optimistic fast path, reducing client communication and utilizing redundant servers for the fast path.
\end{abstract}


\section{Introduction}

Centralization often provides good performance, but has the drawback of posing a single point of failure \cite{S01} and economically, often creates monopoly rents and hampers innovation \cite{Dixon18}. The success of decentralization as in Bitcoin~\cite{N09} and Ethereum~\cite{W14} have spurred the imagination of many as to the potential benefits and significant potential value to society of a scalable decentralized trust infrastructure.

While fundamentally permissionless, the economic friction of buying and then running a Bitcoin or Ethereum mining rig has inherent economies of scale and unfair advantages to certain geographical and political regions. This means that miners are strongly incentivized to join a small set of large mining coalitions~\cite{bitcoinrace}.

In a 2018 study, Gencer et al. \cite{FC18} show that contrary to popular belief, Bitcoin and Ethereum are less decentralized than previously thought. Their study concludes that for both Bitcoin and Ethereum, the top $<20$ mining coalitions control over $90\%$ of the mining power. The authors comment ``These results show that a Byzantine quorum system of size 20 could
achieve better decentralization than Proof-of-Work mining at a much lower resource
cost''. This comment motivates the study of BFT replication systems that can scale to many replicas and are optimized for world scale wide area networks.

BFT replication is a key ingredient in consortium blockchains \cite{HL18,istanbul17}. In addition, large scale  BFT deployments are becoming an important component of public blockchains \cite{CV17}.
There is a growing interest in replacing or combining the current Proof-of-Work mechanisms  with Byzantine fault tolerant replication~\cite{V15,HL,EEA,VB, SBV17}.
Several recent proposals~~\cite{KJ16,M16,PS16,AMNRS16,algo17} explore the idea of building distributed ledgers
that elect a committee (potentially of a few tens or hundreds of nodes) from a large pool of nodes (potentially thousands or more) and have this smaller committee run a Byzantine fault tolerant replication protocol.
In all these protocols, it seems that to get a high security guarantee the size of the  committee needs to be such that it can tolerate at least tens of malicious nodes.

Scaling BFT replication to tolerate tens of malicious nodes requires to re-think BFT algorithms and re-engineer them for high scale. This is the starting point of our work.

\subsection{\sysname: a Scalable Decentralized Trust Infrastructure for Blockchains.}
The main contribution of this paper is a BFT system that
is optimized to work over a group of hundreds of replicas in a world-scale deployment. 
We evaluate our system, \sysname, in a world-scale geo-replicated deployment with 209 replicas withstanding f=64 Byzantine failures. We provide experiments that show how the different algorithmic ingredients of \sysname increase its performance and scalability. The results show that \sysname simultaneously provides almost 2x better throughput and about 1.5x better latency relative to a highly optimized system that implements the PBFT~\cite{CL99} protocol.

Indeed \sysname design starts with the PBFT~\cite{CL99} protocol and then proceeds to add four key design ingredients. Briefly, these ingredients are: (1) going from PBFT to linear PBFT; (2) adding a fast path; (3) using cryptography to allow a single message acknowledgement; (4) adding redundant servers to improve resilience and performance.
We show how each of the four ingredients improves the performance of \sysname. As we will discuss in detail, each ingredient is related to some previous work. The main contribution of \sysname is in the novel combination of these ingredients into a robust system.

\textbf{Ingredient 1: from PBFT to linear PBFT.}
Many previous systems, including PBFT~\cite{CL99}, use an all-to-all message pattern to commit a decision block. 
A simple way to reduce an all-to-all communication pattern to a linear communication pattern is to use a collector. Instead of sending to everyone, each replica sends to the collector and this collector broadcasts to everyone. We call this version \textit{linear PBFT}.
When messages are cryptographically signed, then using threshold signatures \cite{S00,CKS00} one can reduce the outgoing collector message size from linear to constant.

Zyzzyva \cite{KAD09} used this pattern to reduce all-to-all communication by pushing the collector duty to the clients. \sysname pushes the collector duty to the replicas in a round-robin manner. We believe that moving the coordination burden to the replicas is more suited to a blockchain setting where there are many light-weight clients with limited connectivity.
In addition, \sysname uses threshold signatures to reduce the collector  message size and the total computational overhead of verifying signatures. \sysname also uses a round-robin revolving collector to reduce the load. \sysname uses $c+1$ collectors (instead of one) to improve fault tolerance and handle $c$ slow or faulty collectors (where $c$ is typically a small constant).

\textbf{Ingredient 2: Adding a fast path.}
As in Zyzzyva \cite{KAD09}, \sysname allows for a faster agreement path in \textit{optimistic executions}: where all replicas are non-faulty and synchronous. 
No system we are aware of correctly incorporates a dual mode that allows to efficiently run either a fast path or a slow path. Previous systems tried to get a dual mode protocol to do the 
right thing, especially in the view-change protocol, but proved trickier than one
thinks~\cite{revisiting17,Zelma18}. We believe \sysname implements the first correct and practical dual mode view change. We 
rigorously analyzed and tested our view change protocol.  
We note that  Refined-Quorum-Systems \cite{RQS10}, a fast single shot Byzantine consensus protocol, does provide a correct dual model but its protocol for obtaining liveness seems to require exponential time computation in the worst case (and is just single-shot). Azyzzyva \cite{next15}, provides a fast path State-Machine-Replication protocol and allows to switch to a slow path, but avoids running both a fast path and a slow path concurrently. So switching between modes in Azyzzyva requires a lengthy view change each time for each switch, while \sysname can seamlessly switch between paths (without a view change).

\textbf{Ingredient 3: reducing client communication from \textit{f{+}1} to \textit{1}.}
Once threshold signatures are used then an obvious next step is to use them to reduce the number of messages a client needs to receive and verify.
In all previous solutions, including~\cite{CL99,KAD09,BFT-SMART14}, each client needs to receive at least $f{+}1 {=} O(n)$ messages,
each requiring a different signature verification
for request acknowledgement (where $f$ is the number of faulty replicas in a system with $n=3f+1$ replicas). When there are many replicas and many clients, this may add significant overhead.
In \sysname, in the common case, each client needs only one message, containing a single public-key signature for request acknowledgement.
This single message improvement means that \sysname can scale to support many extremely thin clients.

\sysname reduces the per-client linear cost to just one message by adding a phase that uses an \textit{execution collector} to aggregate the execution threshold signatures and send each client a single message carrying one signature. Just like public blockchains (Bitcoin and Ethereum), \sysname uses a Merkle tree to efficiently authenticate information that is read from just one replica.

\sysname uses Boneh–--Lynn--–Shacham (BLS) signatures~\cite{BLS04} which have security that is comparable to 2048-bit RSA signatures but are  just 33 bytes long. Threshold signatures~\cite{threshold-bls} are much faster when implemented over BLS (see Section~\ref{sec:crypto}).

\textbf{Ingredient 4: adding redundant servers to improve resilience and performance.}
\sysname is safe even if there are $f$ Byzantine failures, but the standard fast path works only if all replicas are non-faulty and the system is synchronous. So even a single slow replica may tilt the system from the fast path to the slower path.
To make the fast path more prevalent, 
\sysname allows the fast path to tolerate up to a small number $c$ (parameter) of
crashed or straggler nodes out of $n=3f+2c+1$ replicas. This approach follows the theoretical results that has been suggested before in the
context of single-shot consensus algorithms~\cite{MA06}. So \sysname only falls back to the slower path if there are more than $c$ (and for safety fewer than $f$) faulty replicas. In our experiments we found that setting  $c\leq f/8$ is a good heuristic for up to a few hundreds of replicas.

\subsection{Evaluating \sysname's scalability.}

We implemented \sysname as a scalable BFT engine and a blockchain that executes EVM smart contracts~\cite{W14} (see Section \ref{sec:implementation}).  

All our experimental evaluation is done in a setting that withstands $f=64$ Byzantine failures in a real Wide Area Network deployment.

We first conduct standard key-value benchmark experiments with synthetic workloads. We start with a scale optimized PBFT and then show how adding each ingredient helps improve performance.

While standard key-value benchmark experiments with synthetic workloads are a good way to compare the BFT engine internals, we realize that real world blockchains like Ethereum have a very different type of execution workload based on smart contracts.

We conduct experiments on real world workloads for our blockchain in order to measure the perfomance of a more realisic workload. Our goal is not to do a comparison of a permissioned BFT system against a permissionless proof-of-work system, this is clearly not a fair comparison.

We take 500,000 smart contract executions
that were processed by Ethereum during a 2 months period.
Our experiments show that 
in a world-scale geo-replicated deployment with 209 replicas withstanding f=64 Byzantine failures, we obtain throughput of over 170 smart contract transactions per second with average latency of 620 milliseconds. Our Experiments show that \sysname simultaneously provides almost 2x better throughput and about 1.5x better latency relative to a highly optimized system that implements the PBFT protocol.

We conclude that \sysname is more scalable and decentralized relative to previous BFT solutions. Relative to state-of-art proof-of-work systems like Ethereum, \sysname can run the same smart contracts at a higher throughput, better finality and latency, and can withstand $f=64$ colluding members out of more than 200 replicas. Clearly, Ethereum and other proof-of-work systems benefit from being an open permissionless system, while \sysname is a permissioned blockchain system. We leave the integration of \sysname in a permissionless system for future work. 

We are not aware of any other permissioned blockchain system that can be freely used and can be deployed in a world scale WAN that can scale to over 200 replicas and can withstand $f=64$ Byzantine failures.  We therefore spent several months significantly improving, fixing and hardening an existing PBFT code-base in order to make it reliably work in our experimental setting. We call this implementation \textit{scale optimized PBFT} and experimentally compare it to \sysname.

\textbf{Contributions.}
The main contribution of this paper is a BFT system that
is optimized to work over a group of hundreds of replicas and supports the execution of modern EVM smart contract executions in a world-scale geo-distributed deployment. \sysname obtains its scalability via a combination of 4 algorithmic ingredients: (1) using a collector to obtain linear communication, (2) adding a fast path with a correct view change protocol, (3) reducing client communication using collectors and threshold signatures, (4) adding redundant servers for resilience and performance. While our view change protocol is new, one could argue that the other ingredients mentioned have appeared in some form in previous work. Nevertheless, \sysname is the first to careful weave and implement all these ingredients into a highly efficient and scalable BFT system.

\section{System Model}

We assume a standard \textit{asynchronous} BFT system model where an adversary can control up to $f$ Byzantine nodes and can delay any message in the network by any finite amount (in particular we assume a re-transmit layer and allow the adversary to drop any given packet a finite number of times).
To obtain liveness and our improved results we also distinguish two special conditions. We say that the system is in the \textit{synchronous mode},  when the adversary can control up to $f$ Byzantine nodes, but messages between any two non-faulty nodes have a known bounded delay. Finally we say that the system is in the  \textit{common mode},  when the adversary only controls $c\leq f$ nodes that can only crash or act slow, and messages between any two non-faulty nodes have a known bounded delay.
This three-mode model follows that of Parameterized FaB Paxos~\cite{MA06}.

For $n=3f+2c+1$ replicas \sysname obtains the following  properties: 

(1) \textit{Safety} in the standard asynchronous model (adversary controlling at most $f$ Byzantine nodes and all network delays). This means that any two replicas that execute a decision block for a given sequence number,
execute the same decision block.

(2) \textit{Liveness} in the synchronous mode (adversary controlling at most $f$ Byzantine nodes). Roughly speaking, liveness means that client requests return a response.

(3) \textit{Linearity} in the common mode  (adversary controlling at most a constant $c$ slow/crashed nodes).
Linearity means that in an abstract model where we assume the number of operations in a block is $O(n)$ and we assume the number of clients is also $O(n)$,
then the amortized cost to commit an operation is a constant number of constant size messages. 
In more practical terms, Linearity means that committing each block takes a linear number of constant size messages and that each client sends and receives receives just one message per operation.

\section{Modern Cryptography}
\label{sec:crypto}
We use threshold signatures, where for a threshold parameter $k$, any subset of $k$ from a total of $n$ signers can collaborate to produce a valid signature on any given message, but no subset of less than $k$ can do so.
Threshold signatures have proved useful in previous BFT algorithms and systems (e.g.,~\cite{S00,CKS00,ADKLDNOZ06, ADDKLNOZ10}).  
Each signer holds a distinct private signing key that it can use to generate a signature share. We denote by $x_i(d)$ the signature share on digest $d$ by signer $i$. Any $k$ valid signature shares $\{x_{j}(d) \mid j \in J, |J|=k\}$ on the same digest $d$ can be combined into a single signature 
$x(d)$ using a public function, yielding a digital signature $x(d)$.  A verifier can verify this
signature using a single public key. We use threshold signature schemes which are \textit{robust}, meaning signers can
efficiently filter out invalid signature shares from malicious participants.

We use a robust threshold signature scheme based on Boneh–--Lynn--–Shacham (BLS) signatures~\cite{BLS04}. 
While RSA signatures are built in a group of hidden order, BLS are built using \textit{pairings}~\cite{triple-dh} over elliptic curve groups of known order.
Compared to RSA signatures with the same security level, BLS signatures are substantially shorter. BLS requires 33 bytes compared to 256 bytes for 2048-bit RSA. Creating and combining RSA signature shares via interpolation in the exponent requires several expensive operations~\cite{S00}. In contrast,  BLS threshold signatures~\cite{threshold-bls} allow straightforward interpolation in the exponent with no additional overhead. Unlike RSA, BLS signature shares support batch verification, allowing multiple signature shares (even of different messages) to be validated at nearly the same cost of validating only one signature~\cite{threshold-bls}.

We assume a computationally bounded adversary that cannot do better than known attacks as of 2018 on the cryptographic hash function SHA256 and on BLS BN-P254~\cite{bn-p254} based signatures.
We use a PKI setup between clients and replicas for authentication.

\section{Service Properties}
\sysname provides a scalable fault tolerant implementation of a generic replicated service (i.e., a \textit{state machine replication} service). On top of this we implement an \textit{authenticated key-value }store that uses a Merkle tree interface~\cite{M87} for data authentication. On top of this we implement a 
smart contract layer capable of running EVM byte-code. This layered architecture allows us in the future to integrate other smart contract languages by simply connecting them to the generic authenticated key-value store and allow for better software reuse.

\textbf{Generic service.}
As a generic replication library, \sysname requires an implementation of the following service interface to be received as an initialization parameter. 
The interface implements any deterministic replicated \textit{service} with \textit{state}, deterministic \textit{operations} and read-only \textit{queries}.  
An execution $val=execute(\mathcal{D},o)$ modifies state $\mathcal{D}$ according to the operation $o$ and returns an output $val$. A query $val=query(\mathcal{D},q)$ returns the value of the query $q$ given state $\mathcal{D}$ (but does not change state $\mathcal{D}$). These operations and queries can perform arbitrary deterministic computations on the state. 

The state of the service moves in discrete blocks. Each block contains a series of requests. We denote by $\mathcal{D}_j$ the state of the service at the end of sequence number $j$. We denote by $req_j$ the series of operations of block $j$, that changes the state from state $\mathcal{D}_{j-1}$ to state $\mathcal{D}_j$.

\textbf{An authenticated key-value store.}
For our blockchain implementation we use a key-value store. 
In order to support efficient client acknowledgement from one replica we augment our key-value store with a \textit{data authentication} interface. As in public permissionless blockchains, 
we use a Merkle trees interface \cite{M87} to authenticate data. 
To provide data authentication we require an implementation of the following interface:
(1) $d=\mathit{digest}(\mathcal{D})$ returns the Merkle hash root of $\mathcal{D}$ as digest.
(2) $P=\mathit{proof}(o,l, s,\mathcal{D}, val)$ returns a proof that operation $o$ was executed as the $l$th operation in the  series of requests in the decision block whose sequence number is $s$, whose state is $\mathcal{D}$ and the output of this operation was $val$.  
For a key-value store, proof for a $\mathit{put}$ operation is a Merkle tree proof that the $\mathit{put}$ operation was conducted as the $l$th operation in the requests of sequence number $s$.
For a read only-query $q$, we write $P=\mathit{proof}(q,s,\mathcal{D},val)$ and assume all such queries are executed with respect to $\mathcal{D}_{s}$ (the state $\mathcal{D}$ after completing sequence number $s$). 
For a key-value store, proof for a $\mathit{get}$ operation is a Merkle tree proof that at the state with sequence number $s$ the required variable has the desired value. (3) $\mathit{verify}(d,o,val,s, l,P)$ returns true iff $P$ is a valid proof that $o$ was executed as the $l$th operation in sequence number $s$ and the resulting state after this decision block was executed has a digest of $d$ and $val$ is the return value for operation $o$ (and similarly $\mathit{verify}(d,q,val,s,P)$ when $q$ is a query).
For a key-value store and a  $\mathit{put}$ operation above, the verification is the Merkle proof verification \cite{M87}  rooted at the digest $d$ (Merkle hash root).

\textbf{A Smart contract engine.}
We build upon the replicated key-value store a layer capable of executing Ethereum smart contracts.
This layered architecture allows us in the future to integrate other smart contract languages by simply connecting them to the generic authenticated key-value store and allow for better software reuse.
The EVM layer consists of two main components: 
(1) An implementation of the Ethereum Virtual Machine (EVM), which is the runtime engine of contracts;
(2) An interface for modeling the two main Ethereum transaction types (contract creation and contract execution) as operations in our replicated service.
Ethereum contracts are written in a language called \textit{EVM bytecode}~\cite{W14}, 
a Turing-complete stack-based low-level language, with special commands designed for the Ethereum platform.
The key-value store keeps the state of the ledger service. In particular, it saves the code of the contracts and the contracts' state.
The fact that EVM bytecode is deterministic ensures that the new state digest will be equal in all non-faulty replicas.

\section{\sysname Replication Protocol}
We maintain $n=3f+2c+1$ replicas where  each replica has a unique identifier in $\{1,\dots,3f+2c+1\}$.  This identifier is used to determine the threshold signature in the three threshold signatures: $\sigma$ with threshold ($3f+c+1$), $\tau$ with threshold  ($2f+c+1$), and $\pi$ with threshold ($f+1$). 

We adopt the approach of ~\cite{OL88,CL99} where replicas move from one \textit{view} to another using a \textit{view change} protocol. In a view, one replica is a \textit{primary} and others are backups. The primary is responsible for initiating decisions on a sequence of decisions.
Unlike PBFT~\cite{CL99}, some backup replicas can have additional roles as Commit collectors and/or Execution collectors. 
In a given view and sequence number, $c+1$ non-primary replicas are designated to be \textit{C-collectors} (Commit collectors) and $c+1$ non-primary replicas are designated to be \textit{E-collectors} (Execution collectors). These replicas are responsible for collecting threshold signatures, combining them and disseminating the resulting signature. For liveness, a single correct collector is needed. We use $c+1$ collectors for redundancy in the fast path (inspired by RBFT~\cite{RBFT13}). This increases the worst case message complexity to $O(cn)=O(n)$ when we assume $c$ is a small constant (for $n \approx 200$ we set $c=0,1,2,8$ with $f=64$). In practice we stagger the collectors, so in most executions just one collector is active and the others just monitor in idle.

Roughly speaking the algorithm works as follows in the \textit{fast path} (see Figure \ref{fig:SBFT} for $n=4$, $f=1$, $c=0$):

(1) Clients send operation \textit{request} to the primary.

(2) The primary gathers client requests, creates a decision block and forwards this block to the replicas as a \textit{pre-prepare} message.

(3) Replicas sign the decision block using their 
$\sigma$ ($3f+c+1$)-threshold signature and send a \textit{sign-share} message to the C-collectors.

(4) Each C-collector gathers the signature shares, creates a succinct \textit{full-commit-proof} for the decision block and sends it back to the replicas.
This single message commit proof
has a fixed-size overhead, contains a single signature 
and is sufficient for replicas to commit.

Steps (2), (3) and (4) require linear message complexity (when $c$ is constant) and replace the quadratic message exchange of previous solutions. By choosing a different C-collector group for each decision block, we balance the load over all replicas.

Once a replica receives a commit proof it commits the decision block. The replica then starts the \textit{execution protocol}: 

(1) When a replica has finished executing the sequence of blocks preceding the committed decision block, it executes the requests in the decision block  and signs a digest of the new state using its 
$\pi$ $(f+1)$
threshold signature, and sends a \textit{sign-state} message to the E-collectors.

(2) Each E-collector gathers the signature shares, and creates a succinct \textit{full-execute-proof} for the decision block. It then sends a certificate back to the replicas indicating the state is durable and a certificate back to the client indicating that its operation was executed.

This single message
has fixed-size overhead, contains a single signature and
is sufficient for acknowledging individual clients requests.

Steps (1) and (2) provide  single-message per-request acknowledgement for each client. All previous solutions required a linear number of  messages per-request acknowledgement for each client. When the number of clients is large this is a significant advantage.

Once again, by choosing a different E-collector group for each 
decision block, we spread the overall load of primary leadership, C-collection, and E-collection, among all the replicas.

\begin{figure}[h!]
        \centering
        \includegraphics[width=0.50\textwidth]{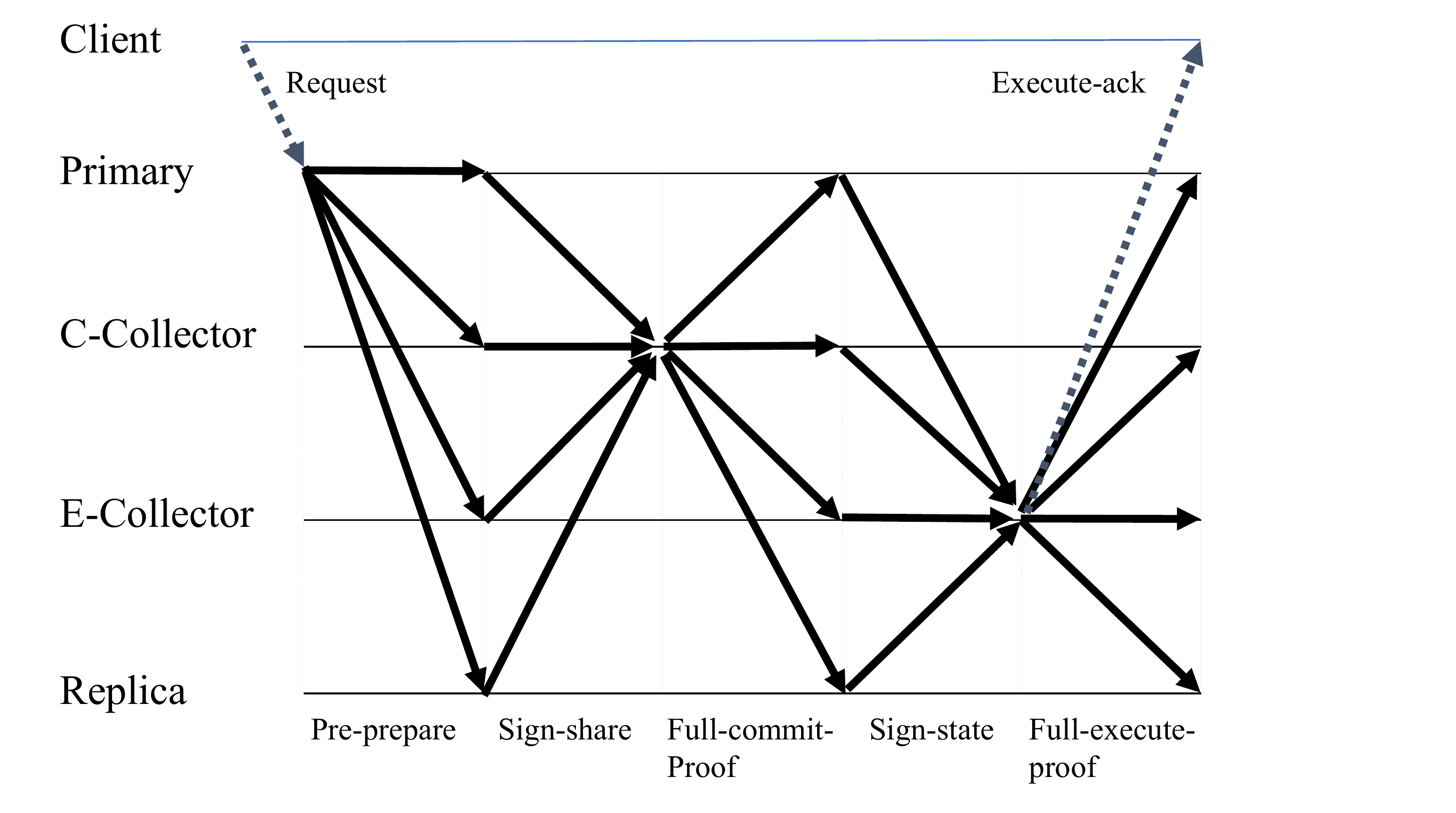}
        \caption{Schematic message flow for $n{=}4,f{=}1$, $c{=}0$. }     
        \label{fig:SBFT}
\end{figure}

\subsection{The Client}
Each client $k$ maintains a strictly monotone timestamp $t$ and requests an operation $o$ by sending a message $\langle \text{``request''}$, $o$, $t$, $k \rangle$ to what it believes is the primary.
The primary then sends the message to all replicas and replicas then engage in an agreement algorithm. 

Previous systems required clients to wait for $f+1$ replies to accept an execution acknowledgment. In our algorithm  the client waits for just a \textit{single} reply $\langle \text{``execute-ack''}$, $s$, $val$, $o$, $\pi(d)$, $\mathit{proof}(o,l,s,\mathcal{D},val) \rangle$
from one of the replicas, and accepts $val$ as the response from executing $o$ by verifying that $\mathit{proof}(o,l,s,\mathcal{D},val)$ is a proof that $o$ was executed 
as the $l$th operation of the decision block that resulted in the state whose sequence number is $s$, the return value of $o$ was $val$, the digest of $\mathcal{D}_{s}$  is $d$. This is done by checking the Merkle proof  $\mathit{verify}(d, o, val, s,l, \mathit{proof}(o,l,s,\mathcal{D},val)) = true$ and that $\pi(d)$ is a valid signature for $\mathcal{D}_{s}$
(when $o$ is long we just send the digest of $o$). 

Upon accepting an execute-ack message the client marks $o$ as executed and sets $val$ as its return value. 

As in previous protocols, if a client timer expires before receiving an execute-ack, the client resends the request to all replicas (and requests a PBFT style $f+1$ acknowledgement path).

\subsection{The Replicas}
The state of each replica includes a log of accepted messages sorted by sequence number, view number and message type. 
The state also includes the current view number, the last stable sequence number $\ls$ (see Section~\ref{sec:garbage}),  the state of the service $\mathcal{D}$ after applying all the committed requests. We also use a known constant $\window$ that limits the number of outstanding blocks.

Each replica has an identity $i \in \{1,\dots,n\}$ used to determine the value of the three threshold key shares:  $\sigma_i$ for a $3f+c+1$ threshold scheme, $\tau_i$ for a $2f+c+1$ threshold scheme, and $\pi_i$ for a $f+1$ threshold scheme. All messages between replicas are done using authenticated point-to-point channels (in practice using TLS 1.2).

As detailed below, replicas can have additional roles of being a \textit{primary} (Leader), a \textit{C-collector} (Commit collector) or an \textit{E-collector} (Execution collector).

The primary for a given view is chosen in a round robin way as a function of $view$. It also stores a current sequence number

The C-collectors and E-collector for a given view and sequence number are chosen as a pseudo-random group from all non-primary replicas, as a function of the sequence number and view\footnote{Randomly choosing the primary and the collectors to provide resilience against a more adaptive adversary is doable, but not is part of the current implementation}. For the fall back Linear-PBFT protocol we always choose the primary as the last collector.

The role of a C-collector is to collect commit messages and send a combined signature back to replicas to confirm commit. The role of an E-collector is to collect execution messages and send a combined signature back to replicas and clients so they all have a certificate that their request is executed.


\subsection{Fast Path}
The fast path protocol is the default mode of execution. It is guaranteed to make progress  when the system is synchronous and there are at most $c$ crashed/slow replicas. 

To commit a new decision block the primary starts a three phase protocol: \textit{pre-prepare, sign-share, commit-proof}. 
In the \textit{pre-prepare} phase the primary forwards its decision block to all replicas. In the \textit{sign-share} phase, each replica signs the requests using its threshold signature and sends it to the C-collectors. In the \textit{commit-proof} phase, each C-collector generates a succinct signature of the decision and sends it to all replicas.

\textbf{Pre-prepare phase:} The primary accepts
$\langle \text{``request''}, o,t,k \rangle$ from client $k$ if the operation $o$ passes the static service authentication and access control rules. Note that this is a state independent test which can be changed via a reconfiguration view change.

Upon accepting at least $b\geq \batch$ client messages (or reaching a timeout) it sets $r = (r_1,\dots,r_{b})$ and broadcasts $\langle \text{``pre-prepare''}, s, v, r \rangle$ to all $3f+2c+1$ replicas where $s$ is the current sequence number, and $v$ is the view number.

The parameter $\batch$ is set via an adaptive algorithm. Roughly speaking the value of $\batch$ is set to be the average number of pending requests divided by the half the maximum amount of allowed concurrent blocks (this number number was to 4 in the experiments).

\textbf{Sign-share phase:} A replica accepts $\langle \text{``pre-prepare''}, s, v, r \rangle$ from the primary if (1) 
its view equals $v$; (2) no previous \text{``pre-prepare''} with the sequence $s$ was accepted for view $v$; (3) the sequence number $s$ is between   $\ls$ and $\ls + \window$; (4)  $r$ is a valid series of operations that pass the authentication and access control requirements.

Upon accepting a pre-prepare message, replica $i$ computes $h=H(s||v||r)$ where $H$ is a cryptographic hash function (SHA256) then signs $h$ by computing a verifiable threshold signature $\sigma_{i}(h)$ 
and sends $\langle \text{``sign-share''}, s, v, \sigma_{i}(h) \rangle$ to the set of C-collectors $C\text{-}collectors(s,v)$.

\textbf{Commit-proof phase:}
a C-collector for $(s,v)$ accepts a $\langle \text{``sign-share''}, s, v, \sigma_{i}(h) \rangle$  from a replica $i$ if  (1) 
its view equals $view$; (2) no previous  \text{``sign-share''} with the same sequence $s$ has been accepted for this view
from replica $i$;
(3) the verifiable threshold signature $\sigma_{i}(h)$ passes the verification.

Upon a C-collector accepting $3f+c+1$ distinct sign-share messages it forms a combined signature $\sigma(h)$, and then sends $\langle \text{``full-commit-proof''}, s, v ,\sigma(h) \rangle$  to all replicas. 

\textbf{Commit trigger:} 
a replica accepts $\langle \text{``full-commit-proof''}, s, v, \sigma(h) \rangle$ if it accepted
 $\langle \text{``pre-prepare''}, s, v, r, h \rangle$, $h=H(s||v||r)$ and $\sigma(h)$ is a valid signature for $h$. Upon accepting a full-commit-proof message, the replica commits $r$ as the requests for sequence $s$.


\subsection{Execution and Acknowledgement} \label{sec:exec}

The main difference of our execution algorithm from previous work is the use of threshold signatures and single client responses.
Once a replica has a consecutive sequence of committed decision blocks it participates in a two phase protocol: \textit{sign-state, execute-proof}.

Roughly speaking, in the sign-state phase each replica signs its state using its $f+1$ threshold signature and sends it to the E-collectors. In the execute-proof phase, each E-collector generates a succinct execution certificate. It then sends this certificate back to the replicas and also sends each client a certificate indicating its operation(s) were executed.

\textbf{Execute trigger and sign state:} when all decisions up to sequence $s$ are executed, and $r$ is the committed request block for sequence $s$, then replica $i$ updates its state to $\mathcal{D}_{s}$ by executing the requests $r$ 
sequentially on the state $\mathcal{D}_{s-1}$. 

Replica $i$ then updates its digest on the state to $d=\mathit{digest}(\mathcal{D}_{s})$, signs $d$ by computing $\pi_{i}(d)$ 
and sends $\langle \text{``sign-state''}, s, \pi_{i}(d) \rangle$ to the set of E-collectors $E\text{-}collectors(s)$.

\textbf{Execute-proof phase:}
an E-collector for $s$ accepts a  $\langle \text{``sign-state''}, s, \pi_{i}(d) \rangle$  from a replica $i$ if
$\pi_i(d)$ passes the verification test.

Upon accepting $f+1$ sign-state messages, it combines them into a single signature $\pi(d)$ and sends $\langle \text{``full-execute-proof''}, s ,\pi(d) \rangle$  to all replicas. 
Replicas that receive full-execute-proof messages verify the signature to accept.

Then the E-collector, for each request $o \in r$ at position $l$ sends to the client $k$ that issued $o$ an execution acknowledgement, $\langle \text{``execute-ack''}, s, l, val,o ,\pi(d), \mathit{proof}(o,l,s,\mathcal{D},val) \rangle$, where $val$ is the response to $o$, $\mathit{proof}(o,l,s,\mathcal{D},val)$ is a proof that $o$ was executed and $val$ is the response at the state whose digest is from $\mathcal{D}_{s}$ and $\pi(d)$ is a signature that the digest of $\mathcal{D}_{s}$ is $d$.

The client, accepts 
$\langle \text{``execute-ack''}, s, l, val ,o,\pi(d),P \rangle$ if $\pi(d)$ is a valid signature and $\mathit{verify}(d, o, val, s,l,P) = true$.

Upon accepting an execute-ack message the client marks $o$ as executed and sets $val$ as its return value.  If the client timer expires then the client re-tries requests and asks for a regular PBFT style acknowledgement from $f+1$. 



\subsection{Linear-PBFT}
This is a fall-back protocol that can provide progress when the fast path cannot make progress.
This protocol is an adaptation of PBFT that is optimized to use threshold signatures and linear communication, avoiding all-to-all communication by using the primary as collector in the intermediate stage and as a fallback collector for commitment collection and execution collection. To guarantee progress when the primary is non-faulty, we use $c+1$ collectors and stagger the collectors so that the $c+1$st collector to activate is always the primary. The worst case communication is $O(cn)$ which is $O(n)$ when $c$ is a constant (say $c=2$). In particular choosing $c=0$ for the fall back protocol would  guarantee $O(n)$ messages (one can still have more collectors in the fast path).

In linear-PBFT, instead of broadcasting messages to all replicas, we use the primary as a single collector (or use $c+1$ collectors for a small constant $c \leq 2$) that composes the threshold signatures into a single signature message. This reduces the number of messages and public key operations to linear, and  makes each message contain just one public-key signature. We call this operation \emph{broadcast-via-collector}: each replica sends its message only to the $c+1$ collectors, each collector waits to aggregate a threshold signature and then sends it to all replicas.

\textbf{Sign-share phase:} we modify the sign-share message of replica $i$ to include both $\sigma_i(h)$ (needed for the fast path) and $\tau_i(h)$ (needed for the Linear PBFT path).

\textbf{Trigger for Linear-PBFT:} A C-collector (including the primary) that received enough threshold shares (via sign-share messages) to create $\tau(h)$ but not to create $\sigma(h)$ waits for a timeout to expire before sending a prepare message to all: $\langle \text{``prepare''}, s, v, \tau(h) \rangle$. This timer controls how long to wait for the fast path before reverting to the PBFT path, we use an adaptive protocol based on past network profiling to control this timer.

\textbf{Prepare phase:} Replica $i$ accepts $\langle \text{``prepare''}, s, v, \tau(h) \rangle$ if
(1)  its view equals $v$; (2)  no previous  \text{``prepare''} with sequence $s$ has been accepted for this view by $i$;
(3)  $\tau(h)$ passes its verification.
Replica $i$ sends $\langle \text{``commit''}, s, v, \tau_i(\tau(h)) \rangle$ to all the collectors. 

\textbf{PBFT commit-proof phase:} A C-collector (including the primary) that received enough threshold shares to create $\tau(\tau(h))$ sends a full-commit-proof-slow message to all: $\langle \text{``full-commit-proof-slow''}, s,v, \tau(\tau(h))\rangle$.

\textbf{Commit trigger for Linear-PBFT:} If a replica receives 
$\langle \text{``full-commit-proof-slow''}, s, v, \tau(\tau(h)) \rangle$ and $\langle \text{``pre-prepare''}, s, v, r, h \rangle$  it verifies that  $h=H(s||v||r)$ then commits $r$ as the decision block at sequence $s$.

\subsection{Garbage Collection and Checkpoint Protocol}\label{sec:garbage}

A decision block at sequence $s$ can have three states: (1) Committed - when at least one non-faulty replica has committed $s$; (2) Executed - when at least one non-faulty replica has committed all blocks from 1 to $s$; (3) Stable - when at least $f+1$ non-faulty replicas have executed $s$. 

When a decision block at sequence $s$ is stable we can garbage collect all previous decisions. As in PBFT we periodically (every $\window/2$) execute a checkpoint protocol in order to update $\ls$ the \textit{last stable} sequence number. 

To avoid the overhead of the quadratic PBFT checkpoint protocol, the second way to update $\ls$ is to add the following restriction. A replica only participates in a fast path of sequence $s$ if $s$ is between $\mathit{le}$ and $ \mathit{le} + (\window$/4) where $\mathit{le}$ is the last executed sequence number. With this restriction, when a replica commits in the fast path on $s$ it sets $\ls:=\max  \{ ls,s- (\window/4)\}$.


\subsection{View Change Protocol}\label{sec:view-change}

The view change protocol  handles 
the non-trivial complexity of having two commit modes: Fast-Path and Linear-PBFT.
Protocols having  two modes like \cite{MA06,KAD09, RQS10,next15} have to carefully handle cases where both modes provide a value to adopt and must explicitly choose the right one. 
\sysname implements a new view change protocol that maintains both safety and liveness while handling the challenges of two concurrent modes. 
\sysname's view change has been carefully implemented \footnote{\sysname has been actively developed and hardened for over 2 years.}, rigorously analyzed and tested\footnote{We ran experiments with hundreds of replicas, doing tens of thousands of view changes, and have tests for Primaries sending partial, equivocating and/or stale information.}.

\textbf{View change trigger:} a replica triggers a view change when a timer expires or if it receives a proof that the primary is faulty (either via a publicly verifiable contradiction or when $f+1$ replicas  complain).

\textbf{View-change phase:} Each replica $i$ maintains a variable 
$\ls$ which is the last stable sequence number. It prepares values $x_{\ls},x_{\ls+1},\dots,x_{\ls+\window}$ as follows. Set $x_{\ls}=\pi(d_{\ls})$ to be the signed digest on the state whose sequence is $\ls$.
For each $ \ls < j \le \ls+\window$ set $x_j=(lm_j,fm_j)$ to be a pair of values as follows: 

Set $lm_j$ to be $\tau(\tau(h_j))$ if a full-commit-proof-slow was accepted for sequence $j$; otherwise set $lm_j$ to be $(\tau(h_j),v_j)$ where $v_j$ is the highest view for sequence $j$ for which $2f+c+1$ prepares were accepted with hash $h_j$ in view $v_j$; otherwise  set $lm_j :=\text{``no commit``}$.

Set $fm_j$ to be $\sigma(h_j)$ if a full-commit-proof was accepted for sequence $j$; otherwise set $fm_j$ to be $(\sigma_i(h_j),v_j)$ where $v_j$ is the highest view for sequence $j$ for which a pre-prepare was accepted with hash $h_j$ at view $v_j$; otherwise set $fm_j :=\text{``no pre-prepare''}$.

Replica $i$ sends to the new primary of view $v+1$ the message $\langle \text{``view-change''}, v, \ls, x_{\ls}, x_{\ls+1},\dots,x_{\ls+\window} \rangle$
where $v$ is the current view number and $x_{\ls},\dots,x_{\ls+\window}$ as defined above.

\textbf{New-view phase:} 
The new primary gathers $2f+2c+1$ view change messages from replicas.  The new primary initiates a new view by sending a set of $2f+2c+1$ view change messages.

\textbf{Accepting a New-view:}
When a replica receives a set $I$ of $|I| = 2f+2c+1$ view change message it processes slots one by one. It starts with $\ls$, the highest valid stable sequence number in all view-change messages, and goes up to $\ls + \window$. For each such slot, a replica either decides it can commit a value, or it adopts it as a pre-prepare by the new primary, according to the algorithm below.

If a replica receives $\sigma(\star)$ or $\tau(\tau(\star)))$, it decides it. Else, it adopts a safe value: 

\textbf{Safe values:} A value $y$ is safe for sequence slot if the only safe thing for the new primary to do is to propose $y$ for the sequence slot in the new view. 

Roughly speaking, $y$ will be the value that is induced by the \textit{highest} view for which there is a potential value that could have been committed in a previous view. Defining this requires carefully defining the highest view for which there is a value in each of the two commit paths and then taking the highest view between the two paths. If there is no value that needs to be adopted, we fill the sequence with a special no-op operation.

More precisely, computing $y$ given $I$ is done as follows:

Set $\ls$ to be the highest last stable value $\ls_i$ sent in $I$ such that $i$ sent $\pi(d_{\ls_i})$ which is correct (this is a proof that $\ls_i$ is a valid checkpoint). 
Fix a slot $j$ within the range $[\ls .. (\ls+\window)]$. Let $X=\{x^i\}_{i \in I}$ be the set of values by the members $I$ for the slot. Since each $x  \in X$ is a pair we split into two sets $X=(\CX,\FX)$. 
If a member in $I$ sent values only up to a lower sequence position, then we can simulate as if these missing values are $x=(\text{``no commit``},\text{``no pre prepare``})$.

If $\FX$ contains $\sigma(h)$ or $\CX$ contains $\tau(\tau(h))$ then let $y$ be $h$ and commit once the message is known; otherwise

(1) If $\CX$ contains at least one $\tau(h)$ then let $\tau(h^*)$ be the $\tau$ signature with the highest view $v^*$ in $\CX$ and let $req^*$ be the corresponding value. Formally: $
v^*=\max \{v \mid \exists (\tau(h),v) \in \CX,  h=H(j||v||req)\}
$,
$
req^*=\{req \mid \exists (\tau(h),v^*) \in \CX, h=H(j||v^*||req)  \}
$.
Otherwise, if $\CX$ contains no $(\tau(h),v)$ then set $v^* := -1$.

(2) We say that a value $req'$ is \emph{fast for $v$} if there exists $f+c+1$ messages in $\FX$ and for each such message $(\sigma_i(h),v) \in \FX$ it is the case that $h=H(j||v'||req')$ and $v' \geq v$.
Let $\hat{v}$ be the highest view such that there exists a value  $req'$ that is \emph{fast for $v$}. If its unique, let $\hat{req}$ the corresponding fast value for $\hat{v}$. Formally:
$
fast(req',v)=1 ~~ \mbox{iff} ~~ \exists M\subset \FX, |M|=f+c+1, \forall (\sigma_i(h),v') \in M,  h=H(j||v'||req') \wedge v' \geq v
$,
$
\hat{v} = \max \{v \mid \exists req' \mid fast(req',v)=1 \}
$,
$
\hat{req} = \{req' \mid fast(req',\hat{v})=1 \}
$.
If no such $\hat{v}$ exists or if for $\hat{v}$ there is more than one potential value $\hat{req}$ then set $\hat{v} := -1$ .

(3) If $v^* \geq \hat{v}$ and $v^*>-1$ then set $y := \langle \text{``pre-prepare''}, j, v+1, req^*, H(j||v+1||req^*) \rangle$.

Otherwise if $\hat{v} > v^* $ then set $y := \langle \text{``pre-prepare''}, j, v+1, \hat{req},H(j||v+1)||\hat{req}) \rangle$.

Otherwise set $y := \langle \text{``pre-prepare''}, j, v+1, \text{``null''}, H(j||v+1||\text{``null''})\rangle$, where $\text{``null''}$ is the no-op operation.

\subsubsection{Efficient view change via pipelining}
Currently \sysname allows committing a sequence number $x$ before the pre-prepare information for sequence numbers $<x$ have arrived. This allows \sysname a high degree of parallelism. This also means that like PBFT, during a view change \sysname needs to suggest a value for each sequence number between $\ls$ and $\ls+\window$.

An alternative approach is to commit sequence number $x$ only after all the pre-prepare messages of all sequences $\leq x$ have arrived and $x$'s hash is a commitment to the whole history. Concretely,  $h_x$ is a hash not of $(r||s||v)$ but of $(r||s||v||h_{x-1})$. This means that when a primary commits sequence number $x$ it is implicitly committing all the sequence numbers $\le x$ in the same decision.

With these changes, we can have a more efficient view change. Instead of sending pre-prepare values (with proof) for each sequence number from $\ls$ to $\ls+\window$, it is sufficient for the new primary to gather from each replica just two pairs (1) the first pair is $(h_j,v)$ where $v$ is the highest view for which the replica has $\tau(h_j)$ and $j$ is the slot number (assume $v:=-1$ if no such view exists); (2) the second pair is $(h'_j,v')$ where  $v'$ is the highest view for which the replica has $f+c+1$ pre-prepare messages with $h_j$ where $j$ is the slot number.

As before, the primary gathers $2f+2c+1$ such messages and chooses the highest view from $(v,h)$ and $(v',h')$ (preferring $(v,h)$ if there is a tie).

The advantage of this view change is that just two values are sent irrespective of the size of the window.


\section{Safety}

Safety is captured in the following Theorem:

\begin{theorem}\label{thm:safe}
If any two non-faulty replicas commit on a decision block for a given sequence number then they both commit on the same decision block.
\end{theorem}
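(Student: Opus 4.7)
The plan is to prove the theorem by strong induction on the view number $v$ in which a commit occurs. Commits in \sysname arise either from the fast path (via a $\sigma(h)$ certificate aggregating $3f+c+1$ sign-shares), from the Linear-PBFT path (via a $\tau(\tau(h))$ certificate built from $2f+c+1$ commit shares, each sent only after the sender accepted a $\tau(h)$ produced from $2f+c+1$ sign-shares), or when a non-faulty replica receives such a certificate inside a new-view message. The inductive invariant I carry is: for every slot $j$ and every view $v' \le v$, any $\sigma$-, $\tau$-, or $\tau(\tau)$-certificate that exists for slot $j$ in view $v'$ carries the same digest $h_j$; in particular, any two non-faulty commits for slot $j$ in views $\le v$ agree.

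For intra-view safety (the base case, reused inside each inductive step) fix view $v$ and slot $j$. A non-faulty replica accepts at most one pre-prepare per $(v,j)$ before producing a sign-share, so the non-faulty contributors to aggregations on distinct digests in view $v$ are disjoint. Using $n = 3f+2c+1$, standard quorum intersection yields: two fast quorums meet in $3f+1$ replicas ($\ge 2f+1$ non-faulty); a fast and a slow quorum meet in $2f+1$ ($\ge f+1$ non-faulty); two slow quorums meet in $f+1$ ($\ge 1$ non-faulty). In every case a common non-faulty replica would have had to sign two distinct pre-prepares in $(v,j)$, a contradiction, so at most one digest is ever certified in view $v$.

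For the inductive step, assume the invariant through view $v$ and suppose some non-faulty replica commits $h$ for slot $j$ in view $v$. Any new-view primary of view $v' > v$ gathers a set $I$ of $2f+2c+1$ view-change messages; by intersection, $I$ meets any fast commit quorum in $2f+c+1$ replicas ($\ge f+c+1$ non-faulty) and any slow commit quorum in $f+c+1$ replicas ($\ge c+1$ non-faulty). If $h$ was committed in the fast path, each of the $f+c+1$ non-faulty members of the intersection reports $(\sigma_i(h), v'')$ with $v'' \ge v$ in $\FX$, forcing $\hat{v} \ge v$ and $\hat{req} = h$ uniquely; if $h$ was committed in the slow path, each of the $c+1$ non-faulty intersection members reports $(\tau(h), v)$ in $\CX$ (or $\tau(\tau(h))$, which is decided immediately), forcing $v^* \ge v$ and $req^* = h$. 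A competing witness $(\tau(h^*), w)$ or $(\sigma_i(h^*), w)$ with $h^* \ne h$ is impossible: if $w = v$ the intra-view argument gives $h^* = h$; if $w > v$ the inductive hypothesis says non-faulty replicas in view $w$ only accept pre-prepares for the safe value $h$, so at most $f$ shares exist for $h^*$, below the $f+c+1$ required for $fast(h^*, w)=1$ and below the $2f+c+1$ required to form $\tau(h^*)$. Hence the tie-break rule $v^* \ge \hat{v}$ selects the committed $h$, the primary re-proposes $h$ in view $v'$, and the intra-view argument applied to view $v'$ forbids any non-faulty commit on a value other than $h$, closing the induction.

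The hard part is exactly this dual-mode safe-value argument: I have to walk through every shape of $(\CX, \FX)$ consistent with a prior commit of $h$ at view $v$ and verify that the new-view computation outputs $h$, especially in the tie case $v^* = \hat{v}$ where the algorithm prefers $req^*$ over $\hat{req}$. Everything else is routine: that a non-faulty replica's view-change message includes the strongest certificate it has observed for slot $j$, and that once a new view is adopted the pre-prepare acceptance rule binds a non-faulty replica to sign only for the safe value.
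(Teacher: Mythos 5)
Your proof follows essentially the same route as the paper's: induction over views, the same quorum-intersection arithmetic (a view-change set of size $2f+2c+1$ meets a fast commit quorum in at least $f+c+1$ non-faulty replicas and a slow commit quorum in at least $c+1$), and the same case analysis of the safe-value computation including the $v^* \geq \hat{v}$ tie-break; the paper merely packages this as two lemmas (one per commit path, each carrying four ``locked-set'' properties by induction from the first committing view onward) rather than a single strong induction. The one caution is that your stated invariant (certificate uniqueness per view) is weaker than what your inductive step actually uses --- you additionally need, as the paper's properties 1, 3 and 4 record, that after a commit of $h$ at view $v$ no non-faulty replica accepts a pre-prepare for any $req' \neq req$ in any later view and that the locked replicas' highest witnesses remain bound to $req$; you invoke exactly these facts, so they should be folded into the induction hypothesis explicitly.
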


Fix a sequence number $j$ and let  $v'$ be smallest view number at which some non-faulty replica commits on a decision block. Let the hash of this decision block on sequence number $j$ be $h$ where $h=H(j||v'||req)$. 
There are two cases to consider: (1) at least one of the  non-faulty that committed at view $v'$ committed due to a signature $\tau(\tau(h))$ or (2) all the committing non-faulty at $v'$ committed due to a signature $\sigma(h)$. We will now prove Theorem \ref{thm:safe} by looking at each cases separately.

\begin{lemma} 
If a non-faulty commits to $req$ on sequence $j$ due to $\tau(\tau(h))$ at view $v'$ then there exists a set $SC$ of $f+c+1$ non-faulty replicas such that for any view $v \geq v'$:
\begin{enumerate}
    \item For each replica $r \in SC$, at view $v$, replica $r$'s highest commit proof $lm_j=\tau(\tau(h))$ is for hash $h=H(j||v''||req)$ such that $v \geq v'' \geq v'$.
    
    \item Any valid signature $\tau(h')$ where $h'=H(j||v''||req'')$ generated by the adversary has the property that either $req''=req$ or $v''<v'$.
    
    \item If $v=v'$ then there are at most $f+c$ non-faulty that accepted a pre-prepare with some $req' \neq req$ at view $v'$.
    
    \item If $v>v'$ then there are no non-faulty replicas that accepted a pre-prepare with some $req' \neq req$ at view $v'$.
    
\end{enumerate}
\end{lemma}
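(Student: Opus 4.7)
The plan is to prove all four items by a simultaneous induction on $v \geq v'$, with the base case established by a counting argument on threshold-signature shares and the inductive step driven by the safe-value rule of the view-change protocol. First I would \emph{define} $SC$: since $\tau(\tau(h))$ is a $(2f{+}c{+}1)$-threshold aggregation of commit shares $\tau_i(\tau(h))$, and at most $f$ replicas are Byzantine, at least $f{+}c{+}1$ of the contributing shares originate from non-faulty replicas; take these to be $SC$. By construction each $r \in SC$ must have accepted the prepare message carrying $\tau(h)$, and hence earlier accepted the pre-prepare $\langle\text{``pre-prepare''}, j, v', req\rangle$ with $h = H(j\|v'\|req)$, locking $r$ against any conflicting pre-prepare at view $v'$.

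For the base case $v = v'$: item~1 holds because every $r \in SC$ records, upon signing its commit share, either the full-commit-proof-slow itself or $(\tau(h), v')$ --- both on $req$. Item~2 at view $v'$ follows from a share-counting argument: $SC$ contributes $f{+}c{+}1$ non-faulty replicas that cannot sign a different $h'$ at view $v'$ (one pre-prepare per $(j,v')$), so a forged $\tau(h')$ on $h' \neq h$ could draw at most $f + (2f{+}2c{+}1 - (f{+}c{+}1)) = 2f{+}c$ valid shares, below the $2f{+}c{+}1$ threshold. Item~3 is the direct count that only $2f{+}2c{+}1 - (f{+}c{+}1) = f{+}c$ non-faulty replicas remain outside $SC$ and free to accept a different pre-prepare at view $v'$. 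Item~4 is vacuous at $v = v'$.

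For the inductive step from $v$ to $v+1$ I would leverage the view-change protocol. Any new primary of view $v{+}1$ gathers $2f{+}2c{+}1$ view-change messages; by inclusion-exclusion against $SC$, at least $c{+}1$ of them come from $SC$ and therefore carry an $lm_j$ pinned to $req$ at some $v'' \geq v'$. I would then walk through the safe-value computation case by case: if any $\tau(\tau(h))$ on $req$ is present in $CX$, it is adopted directly; if $v^* \geq \hat v$, then an $SC$-sourced entry forces $v^* \geq v'$ and the inductive form of item~2 forces $req^* = req$; if $\hat v > v^*$, then $\hat v > v'$, and the $f{+}c{+}1$ shares witnessing ``fast for $\hat v$'' include at least $c{+}1$ non-faulty replicas that would have had to accept a conflicting pre-prepare at a view $> v'$, contradicting the inductive form of item~4. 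This simultaneously re-establishes items~1, 2, and 4 at $v{+}1$: item~4 from the safe-value analysis, item~2 because forging $\tau(h')$ at view $v{+}1$ on a conflicting $req'$ would require $f{+}c{+}1$ non-faulty sign-shares barred by item~4, and item~1 because replicas in $SC$ that had not yet received the full-commit-proof-slow will adopt it through the new-view bundle (or extend their $(\tau(h), v'')$ to the latest $v''$ on $req$).

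The main obstacle I anticipate is the interaction between the fast-path $\sigma$-signatures that populate $\FX$ in a view change and the slow-path $\tau$-signatures that populate $\CX$, because the sign-share message co-produces both $\sigma_i(h)$ and $\tau_i(h)$. Pinning down item~4 in the inductive step hinges on showing that no spurious $\hat{req} \neq req$ can be ``fast for $\hat v$'' at any $\hat v \geq v'$ --- either because fewer than $f{+}c{+}1$ non-faulty shares are available (when $\hat v = v'$, by the same count that drives the base case) or because item~4 inductively forbids non-faulty replicas from pre-preparing a conflicting request at $\hat v > v'$. A companion observation I would record is that when $\hat{req}$ is not unique the safe-value rule sets $\hat v := -1$, which together with the $SC$-induced lower bound $v^* \geq v'$ on $CX$ guarantees the $CX$ branch always wins a tie on $req$.
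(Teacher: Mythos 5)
Your proposal is correct and follows essentially the same route as the paper's proof: the same choice of $SC$ from the $f{+}c{+}1$ non-faulty contributors to $\tau(\tau(h))$, the same base-case share-counting for items~2 and~3, and the same induction driven by quorum intersection with view-change sets plus the case analysis of the safe-value rule (including the tie-break that prefers the $\CX$ branch, so $req^*=req$ wins when $\hat v = v^* = v'$). The only differences are cosmetic, e.g.\ you compute the intersection with $SC$ as at least $c{+}1$ messages where the paper only needs ``at least one.''
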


\begin{proof}
The proof is by induction on $v$ for all $v\geq v'$. We start with the base case of $v=v'$. Property 1 holds since some non-faulty committed after seeing $2f+c+1$ commit messages on $h=H(j||v'||req)$, of which at least $f+c+1$ came from non-faulty. Fix $SC$ to be this set of non-faulty. Property 2 and 3 hold since
in view $v'$ at least $f+c+1$ non-faulty send a pre-prepare for $h$ and a non-faulty will send at most one pre-prepare per view, so there can be at most $f+c$ non-faulty that sent a pre-prepare for $req' \neq req$ at view $v=v'$. Finally, note that property 4 is vacuously true.

Assume the properties hold by induction on all views smaller than $v$ and consider view $v$. From property 1 of the induction hypothesis on view $v-1$, any view change to view $v$ must include at least one message $\tau(h)$ with $h=H(j||v''||req)$ from $SC$. This is true because any view change set is of size $2f+2c+1$ and so it must intersect the set $SC$ from property 1. From property 2, $v''$ is larger than any other $\tau(h')$ for $req' \neq req$ included in any view change set. From properties 3 and 4 it must be the case that if there are $f+c+1$ replicas that send pre-prepare for for $req' \neq req$ then at least $c+1$ of them are honest and hence must be of view at most $v'$.

Recall that in the view change protocol we define that $req'$ is fast for view $u$ if there exists $f+c+1$ messages in $\FX$ and for each such message $(\sigma_i(z),u) \in \FX$ it is the case that $z=H(j||v'||req')$ and $v' \geq u$. Therefore it must be the case that if $req'$ is fast for view $u$ then $u \leq v'$. Also recall that in the view change protocol $v*$ is defined to be the highest view with a prepare message $\tau(h^*)$ in $LX$, so we have $v* \geq v'$. Note that even if $v* = v'$ then the outcome of the view change protocl (the value $y_j$) will use $req$ over $req'$ (this is where our view change algorithm prefers the slow path proof over the fast path proof).

Together this means that the only possible outcome of the view change will be to set $y_j$ for view $v$ to have the value $req$. This is true because property 1 shows that $req$ will be seen in a prepare message, property 2 that it will be associated with the maximal prepare, and property 3+4 that if $\hat{req}$ is chosen then it must be that $\hat{req}=req$. This proves property 4 (property 3 is vacuously true).

Given property 4 on view $v$ we now prove properties 1 and 2 for view $v$. Since the only safe value for view $v$ is $req$ then the only change in $SC$ is that a replica may update its highest view to $v$ (if the Primary in $v$ manages to send a prepare message in view $v$) but this must be for the same block content $req$ so property 1 holds. Similarly property 2 holds because no non-faulty we accept in $v$ a pre-prepare for $req' \neq req$ and hence no such $\tau(h')$ can be generated.
\end{proof}

We now do a similar analysis for the case that all committing non-faulty at $v'$  do it due to a signature $\sigma(h)$. Importantly, property 3 below is slightly stronger than property 3 above.

\begin{lemma} 
If a non-faulty commits to $req$ on sequence $j$ due to $\sigma(h)$ in view $v'$ then there exists a set $FC$ of $2f+c+1$ non-faulty replicas such that for any view $v \geq v'$:
\begin{enumerate}
    \item For each replica $r \in FC$, in view $v$, the highest view $v''$ for which a pre-prepare was accepted with hash $h=H(j||v''||req'')$ has the property that  $v'' \geq v'$ and $req''=req$.
    
    \item Any valid signature $\tau(h')$ where $h'=H(j||v''||req'')$ generated by the adversary has the property that either $req''=req$ or $v''<v'$.
    
    \item If $v=v'$ then there are at most $c$ non-faulty that accepted a pre-prepare with some $req' \neq req$ at view $v'$.
    
    \item If $v>v'$ then there are no non-faulty that accepted a pre-prepare with some $req' \neq req$ at view $v'$.

\end{enumerate}
\end{lemma}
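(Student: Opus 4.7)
The plan is to mirror the inductive structure used in the preceding lemma, but to exploit the fact that $\sigma(h)$ requires $3f+c+1$ sign-share signatures rather than the weaker threshold $2f+c+1$ used for $\tau$. This larger quorum lets me build a witness set $FC$ of size $2f+c+1$, strictly larger than the set $SC$ of size $f+c+1$ in the previous lemma, and this is precisely what tightens property~3 from ``$f+c$'' to ``$c$''. I induct on $v\geq v'$.

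For the base case $v=v'$, I take $FC$ to be the $2f+c+1$ non-faulty replicas whose sign-share messages were aggregated into the $3f+c+1$-threshold signature $\sigma(h)$ witnessed by the committing replica. Since the total number of non-faulty is $2f+2c+1$, at most $c$ non-faulty lie outside $FC$, yielding property~3; property~4 is vacuous. Property~1 holds because every $r\in FC$ accepted the pre-prepare for $req$ in view $v'$ before producing its share. Property~2 at $v=v'$ follows from a quorum-counting argument: forging $\tau(h')$ for $req''\neq req$ at view $v'$ requires $2f+c+1$ sign-share shares, but members of $FC$ will not sign a second pre-prepare in the same view, leaving at most $c$ honest shares outside $FC$ plus $f$ Byzantine shares, for a total of $f+c<2f+c+1$.

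For the inductive step at view $v>v'$, I analyze the output of the view-change algorithm on any set $I$ of $2f+2c+1$ view-change messages. Since $|FC|=2f+c+1$ and the complement among all replicas has size $f+c$, at least $f+c+1$ members of $I$ lie in $FC$, and by property~1 at $v-1$ each such member contributes a fast share $\sigma_i(h)$ for $req$ with view $v''\geq v'$; hence $req$ is fast for $v'$ and $\hat v\geq v'$. A symmetric counting argument shows no $req'\neq req$ can muster $f+c+1$ matching shares in $\FX$, so if $\hat v$ is defined then $\hat{req}=req$ uniquely. For the slow branch, property~2 at $v-1$ forces any $\tau(h^*)$ in $\CX$ with $v^*\geq v'$ to carry $req^*=req$, and if $v^*<v'$ then the preference rule selects $\hat v$ and again $req$. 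Hence the only safe value for slot $j$ in view $v$ is $req$, giving property~4 at $v$. Properties~1 and~2 at $v$ then follow: any pre-prepare accepted by an $FC$ member at $v$ is for $req$; and to forge $\tau(h')$ for $req''\neq req$ with $v''\in[v',v]$ the adversary commands at most $c+f$ shares at $v''=v'$ (by property~3) and at most $f$ shares at each higher view (by property~4), both strictly below $2f+c+1$.

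The main obstacle will be verifying that the uniqueness clause which sets $\hat v:=-1$ whenever more than one $\hat{req}$ candidate exists never triggers in a way that harms safety, and correctly combining it with the tie-breaking rule between $v^*$ and $\hat v$. Both hinge on pinning down the exact intersection $|I\cap FC|\geq f+c+1$ and arguing that no competing $\hat{req}\neq req$ can accumulate $f+c+1$ shares in $\FX$, which is precisely where the tighter property~3 (at most $c$ honest defectors at view $v'$) is essential and where the fast-path analysis genuinely differs from the slow-path analysis of the previous lemma.
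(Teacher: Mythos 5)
Your proposal follows essentially the same route as the paper's proof: the same induction on $v$, the same choice of $FC$ as the $2f{+}c{+}1$ non-faulty signers behind $\sigma(h)$, the same intersection bound $|I\cap FC|\ge f{+}c{+}1$, and the same use of properties 3 and 4 to show any competing $req'$ is fast only for views below $v'$ while property 2 handles the slow branch. The only nit is that the phrase ``no $req'\neq req$ can muster $f{+}c{+}1$ matching shares in $\FX$'' should be qualified ``all with view $\ge v'$'' (a competitor may well gather $f{+}c{+}1$ shares at lower views), but your closing paragraph makes clear you intend exactly the argument the paper gives.
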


\begin{proof}
The proof is by induction on $v$ for all $v\geq v'$. We start with the base case of $v=v'$. Property 1 holds since some non-faulty committed after seeing $3f+c+1$ pre-prepare messages on $h=H(j||v'||req)$, of which at least $2f+c+1$ came from non-faulty. Property 2 and 3 hold since
in view $v=v'$ at least $2f+c+1$ non-faulty send a pre-prepare for $h$ and a non-faulty will send at most one pre-prepare per view, so there can be at most $c$ non-faulty that sent a pre-prepare for some $req' \neq req$ (property 4 is vacuously true).

Assume the properties hold by induction on all views smaller than $v$ and consider view $v$. From the induction hypothesis property 1 on view $v-1$ we have that any view change to $v$ must include at least $f+c+1$ messages of the form $\sigma_i(h)$ with $h=H(j||v''||req)$ from $ i \in FC$. This is true because any view change set is of size $2f+2c+1$ and so it must intersect $FC$ with at least $f+c+1$ members. From the induction hypothesis on property 3 and 4 it follows that if there are $f+c+1$ replicas that send pre-prepare for $req' \neq req$  then it must be the case that at least \emph{one} of the $c+1$ honest must be of view that is strictly smaller than $v'$. This implies that if some $req' \neq req$ is a fast for $u$ then $u<v'$. Hence $req$ will be the unique value that is fast at view at least $v'$ and hence the view change protocol will set $\hat{req}=req$.

We still need to show that $\hat{req}=req$ will be chosen over any $req^* \neq req$. From the induction hypothesis on property $2$ it follows that for $req^* \neq req$ it must be that $v^* < v'$ and hence in the view change $\hat{v}$ will cause $\hat{req}=req$ to be selected.

Together this means that the only outcome of the view change will be to set $y_j$ for view $v$ to have the value $req$. This is true because property 1 shows that $req$ will be seen at fast for at least $v'$, property 2 that $v^*$  will be strictly smaller if $req^* \neq req$ , and property 3+4 imply that if $\hat{req}$ is chosen then it must be that $\hat{req}=req$. This proves property 4 (property 3 is vacuously true).

Given property 4 on $v$ we now prove properties 1 and 2. Since the only safe value for $v$ is $req$ then the only change in $SC$ is that replicas may update their highest view of a pre-prepare to view $v$ but this must be for $req$ so property 1 holds. Similarly property 2 holds because no non-faulty we accept in $v$ a pre-prepare for $req' \neq req$ and hence no such $\tau(h')$ can be generated.
\end{proof}

Theorem \ref{thm:safe}  follows directly from the  the two lemma above.

\section{Liveness}

\sysname is deterministic so lacks liveness in the asynchronous mode, due to FLP~\cite{FLP85}.
As in PBFT~\cite{CL99}, liveness is obtained by striking a balance between making progress in the current view and moving to a new view.
\sysname uses the techniques of PBFT~\cite{CL99} tailored to a larger deployment: (1) exponential back-off view change timer; (2) replica issues a view change if it hears $f+1$ replicas issue a view change; (3) a view can continue making progress even if $f$ or less replicas send a view change. Finally \sysname uses $c+1$ collectors to make progress in the fast path and in the common path we ensure that one of the collectors is the primary.

 Unlike some protocols, our protocol can always wait for at most $n-f$ messages to make progress both in the common path and in the view change (for example in PBFT journal version you may need to wait for more messages in the view change protocol). Hence not only is our protocol clearly deadlock free, it is also \emph{reactive}, meaning that after GST it makes progress at the speed of the fastest $n-f$ replicas and does not need to wait for the maximum network delay.

We still need to show that progress is made after GST with a non-faulty primary. Again this is quite strait forward and follows from the fact that in the common mode the Primary is also a collector.

Finally we note that the liveness of the view change protocol follows from the following pattern: the primary makes a decision based on signed messages (its proof) and then forwards both the decision and the signed messages (its proof) so all replicas can repeat exactly the same computation.

\section{\sysname Implementation}\label{sec:implementation}

\sysname is implemented in C++ and follows some parts of the design of the original PBFT code~\cite{PBFT-lib,CL99,CL02}, in particular PBFT's \textit{state transfer} mechanism. \sysname has been in active development for over two years.

\textbf{Cryptography implementation}
Cryptographic primitives (RSA 2048, SHA256, HMAC) are implemented using the Crypto++ library~~\cite{CryptoCPP2016}. 
To implement threshold BLS, we use RELIC \cite{relic-toolkit}, a cryptographic library with support for pairings.
We use the BN-P254 \cite{bn-p254} elliptic curve, which provides the same security as 2048-bit RSA (i.e., 110-bit security) even with recent developments on discrete-log attacks \cite{barbulescu-key-size,menezes-pairing-security}.
To reduce latency associated with combining threshold BLS based shares (in the collectors) we  parallelized the independent exponentiations and use a background thread.
In the fast path, as long as no failure is detected, we use a BLS group signature ($n$-out-of-$n$ threshold) which provides smaller latency than BLS threshold signatures. We implemented a mechanism to automatically switch to and from group signatures and threshold signatures based on recent history. 

\textbf{\sysname batching and parallelism parameters}
We use an adaptive leaning algorithm that dynamically modifies the size of the parameter  $\batch$ based on the number of currently running concurrent sequence numbers. The $\batch$ size is the minimum number of client operations in each block.
The number of decision blocks that can be committed in parallel is $\window = 256$. The value $active\text{-}window= \lfloor (n-1)/(c+1) \rfloor$ is the actual number of decision blocks that are committed in parallel by the primary.

\textbf{Blockchain smart contract implementation}
The EVM implementation we used is based on cpp-ethereum~\cite{CPPEth}. 
We integrated storage-related commands with our key-value store interface and use RocksDB~\cite{RocksDB} as its backend.

\section{Performance Evaluation}

\begin{figure*}[!ht]
\begin{minipage}{0.005\paperwidth}
{\scriptsize \rotatebox{90}{Throughput (operations/second)}}
\end{minipage}
\begin{minipage}{0.990\paperwidth}
	\begin{center}
\begin{minipage}{1.0\textwidth}		
		\begin{tikzpicture}
		\begin{axis}[throughputGraphStyle,plotSetA,title={\scriptsize \textbf{no failures}},width=0.29\linewidth,ylabel = {\scriptsize \textbf{batch{=}64}}]
		\addplot [blues5,mark=square*,style={dashed}
] table [x={clients}, y={PBFTThr}]
		{Figures/ThroughputF64Fails0Batch64.txt}; 
		\addplot [antiquefuchsia,mark=oplus*,style={dashed}] table [x={clients}, y={LinearPBFTThr}]
{Figures/ThroughputF64Fails0Batch64.txt};
		\addplot [black,mark=oplus] table [x={clients}, y={SBFTNoMerThr}]
		{Figures/ThroughputF64Fails0Batch64.txt}; 
		\addplot [red,mark=triangle*] table [x={clients}, y={SBFTCZeroThr}]
		{Figures/ThroughputF64Fails0Batch64.txt};
		\addplot [darkspringgreen,mark=diamond*] table [x={clients}, y={SBFTCEightThr}]
		{Figures/ThroughputF64Fails0Batch64.txt};
\addlegendentry{\cPBFT};
\addlegendentry{\cLinearPBFT};
\addlegendentry{\cSBFTTagA};
\addlegendentry{\cSBFTCZero};
\addlegendentry{\cSBFTCEight};
		\end{axis}
		\end{tikzpicture}
		\begin{tikzpicture}
		\begin{axis}[throughputGraphStyle,title={\scriptsize \textbf{8 failures}},width=0.29\linewidth]
		\addplot [blues5,mark=square*,style={dashed}
		] table [x={clients}, y={PBFTThr}]
		{Figures/ThroughputF64Fails8Batch64.txt}; 
		\addplot [antiquefuchsia,mark=oplus*,style={dashed}] table [x={clients}, y={LinearPBFTThr}]
		{Figures/ThroughputF64Fails8Batch64.txt};
		\addplot [black,mark=oplus] table [x={clients}, y={SBFTNoMerThr}]
		{Figures/ThroughputF64Fails8Batch64.txt}; 
		\addplot [red,mark=triangle*] table [x={clients}, y={SBFTCZeroThr}]
		{Figures/ThroughputF64Fails8Batch64.txt};
		\addplot [darkspringgreen,mark=diamond*] table [x={clients}, y={SBFTCEightThr}]
		{Figures/ThroughputF64Fails8Batch64.txt};
		\end{axis}
		\end{tikzpicture}
		\begin{tikzpicture}
		\begin{axis}[throughputGraphStyle,title={\scriptsize \textbf{64 failures}},width=0.29\linewidth]
		\addplot [blues5,mark=square*,style={dashed}
		] table [x={clients}, y={PBFTThr}]
		{Figures/ThroughputF64Fails64Batch64.txt}; 
		\addplot [antiquefuchsia,mark=oplus*,style={dashed}] table [x={clients}, y={LinearPBFTThr}]
		{Figures/ThroughputF64Fails64Batch64.txt};
		\addplot [black,mark=oplus] table [x={clients}, y={SBFTNoMerThr}]
		{Figures/ThroughputF64Fails64Batch64.txt}; 
		\addplot [red,mark=triangle*] table [x={clients}, y={SBFTCZeroThr}]
		{Figures/ThroughputF64Fails64Batch64.txt};
		\addplot [darkspringgreen,mark=diamond*] table [x={clients}, y={SBFTCEightThr}]
		{Figures/ThroughputF64Fails64Batch64.txt};
		\end{axis}
		\end{tikzpicture}
\end{minipage}		
\begin{minipage}{1.0\textwidth}		
	\begin{tikzpicture}
	\begin{axis}[throughputGraphStyle,width=0.29\linewidth,ylabel = {\scriptsize \textbf{no batch}}]
	\addplot [blues5,mark=square*,style={dashed}
	] table [x={clients}, y={PBFTThr}]
	{Figures/ThroughputF64Fails0Batch1.txt}; 
	\addplot [antiquefuchsia,mark=oplus*,style={dashed}] table [x={clients}, y={LinearPBFTThr}]
	{Figures/ThroughputF64Fails0Batch1.txt};
	\addplot [black,mark=oplus] table [x={clients}, y={SBFTNoMerThr}]
	{Figures/ThroughputF64Fails0Batch1.txt}; 
	\addplot [red,mark=triangle*] table [x={clients}, y={SBFTCZeroThr}]
	{Figures/ThroughputF64Fails0Batch1.txt};
	\addplot [darkspringgreen,mark=diamond*] table [x={clients}, y={SBFTCEightThr}]
	{Figures/ThroughputF64Fails0Batch1.txt};
	\end{axis}
	\end{tikzpicture}
	\begin{tikzpicture}
	\begin{axis}[throughputGraphStyle,width=0.29\linewidth]
	\addplot [blues5,mark=square*,style={dashed}
	] table [x={clients}, y={PBFTThr}]
	{Figures/ThroughputF64Fails8Batch1.txt}; 
	\addplot [antiquefuchsia,mark=oplus*,style={dashed}] table [x={clients}, y={LinearPBFTThr}]
	{Figures/ThroughputF64Fails8Batch1.txt};
	\addplot [black,mark=oplus] table [x={clients}, y={SBFTNoMerThr}]
	{Figures/ThroughputF64Fails8Batch1.txt}; 
	\addplot [red,mark=triangle*] table [x={clients}, y={SBFTCZeroThr}]
	{Figures/ThroughputF64Fails8Batch1.txt};
	\addplot [darkspringgreen,mark=diamond*] table [x={clients}, y={SBFTCEightThr}]
	{Figures/ThroughputF64Fails8Batch1.txt};
	\end{axis}
	\end{tikzpicture}
	\begin{tikzpicture}
	\begin{axis}[throughputGraphStyle,width=0.29\linewidth]
	\addplot [blues5,mark=square*,style={dashed}
	] table [x={clients}, y={PBFTThr}]
	{Figures/ThroughputF64Fails64Batch1.txt}; 
	\addplot [antiquefuchsia,mark=oplus*,style={dashed}] table [x={clients}, y={LinearPBFTThr}]
	{Figures/ThroughputF64Fails64Batch1.txt};
	\addplot [black,mark=oplus] table [x={clients}, y={SBFTNoMerThr}]
	{Figures/ThroughputF64Fails64Batch1.txt}; 
	\addplot [red,mark=triangle*] table [x={clients}, y={SBFTCZeroThr}]
	{Figures/ThroughputF64Fails64Batch1.txt};
	\addplot [darkspringgreen,mark=diamond*] table [x={clients}, y={SBFTCEightThr}]
	{Figures/ThroughputF64Fails64Batch1.txt};
	\end{axis}
	\end{tikzpicture}
\end{minipage}				
	\end{center}
\end{minipage}
	\vspace{-3mm} 
	\begin{center}
		{ \scriptsize Number of clients }
	\end{center}
	\begin{center}
		\ref{plotSetALegened}
	\end{center}
    \vspace{-7mm}
	\caption{Throughput per clients} 
	\label{fig:throughput}
\end{figure*}
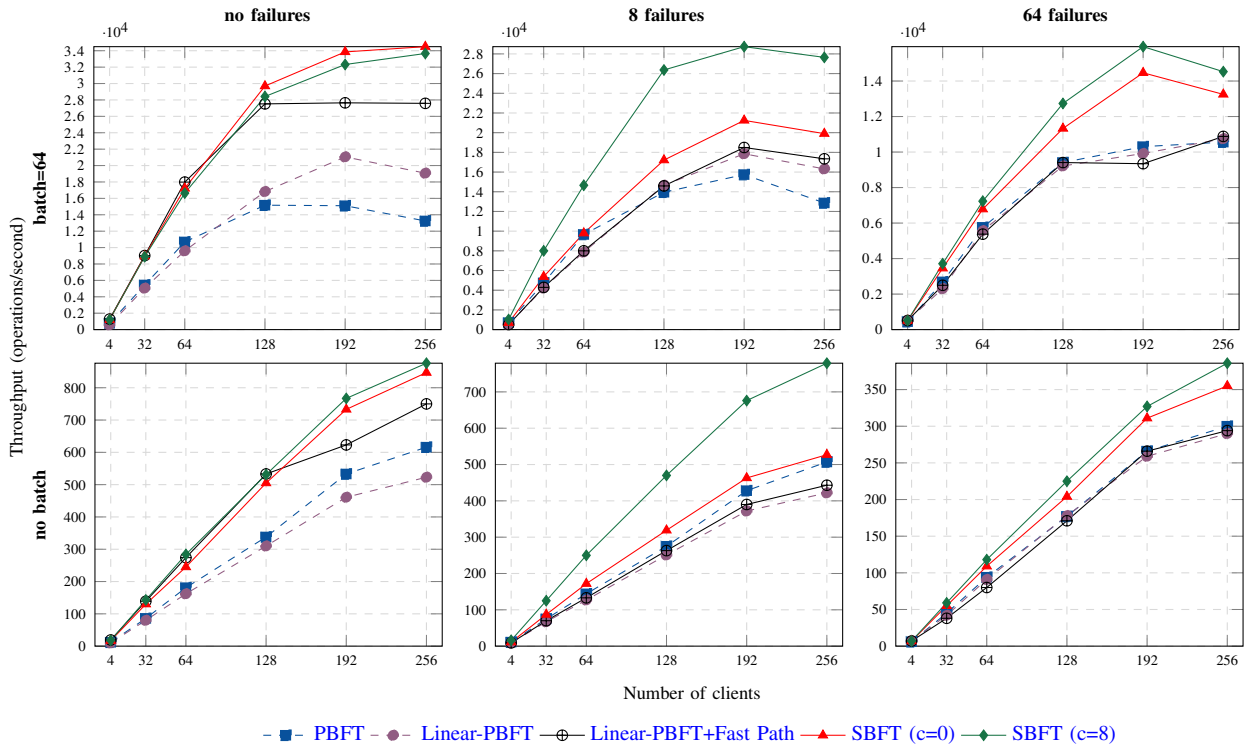

\begin{figure*}[!ht]
	\begin{minipage}{0.005\paperwidth}
		{\scriptsize \rotatebox{90}{Latency (milliseconds)}}
	\end{minipage}
	\begin{minipage}{0.990\paperwidth}
		\begin{center}
			\begin{minipage}{1.0\textwidth}		
		\begin{tikzpicture}
\begin{axis}[LatencyGraphStyle,scaled x ticks=false, plotSetB,title={\scriptsize \textbf{no failures}}, width=0.29\linewidth,ylabel = {\scriptsize \textbf{batch{=}64}}]
\addplot [blues5,mark=square*,style={dashed}
] table [x={PBFTThr}, y={PBFTLat}]
{Figures/ThroughputF64Fails0Batch64.txt}; 
\addplot [antiquefuchsia,mark=oplus*,style={dashed}] table [x={LinearPBFTThr}, y={LinearPBFTLat}]
{Figures/ThroughputF64Fails0Batch64.txt};
\addplot [black,mark=oplus] table [x={SBFTNoMerThr}, y={SBFTNoMerLat}]
{Figures/ThroughputF64Fails0Batch64.txt}; 
\addplot [red,mark=triangle*] table [x={SBFTCZeroThr}, y={SBFTCZeroLat}]
{Figures/ThroughputF64Fails0Batch64.txt};
\addplot [darkspringgreen,mark=diamond*] table [x={SBFTCEightThr}, y={SBFTCEightLat}]
{Figures/ThroughputF64Fails0Batch64.txt};		
\end{axis}
\end{tikzpicture}		
\begin{tikzpicture}
\begin{axis}[LatencyGraphStyle,scaled x ticks=false,title={\scriptsize \textbf{8 failures}},width=0.29\linewidth]
\addplot [blues5,mark=square*,style={dashed}
] table [x={PBFTThr}, y={PBFTLat}]
{Figures/ThroughputF64Fails8Batch64.txt}; 
\addplot [antiquefuchsia,mark=oplus*,style={dashed}] table [x={LinearPBFTThr}, y={LinearPBFTLat}]
{Figures/ThroughputF64Fails8Batch64.txt};
\addplot [black,mark=oplus] table [x={SBFTNoMerThr}, y={SBFTNoMerLat}]
{Figures/ThroughputF64Fails8Batch64.txt}; 
\addplot [red,mark=triangle*] table [x={SBFTCZeroThr}, y={SBFTCZeroLat}]
{Figures/ThroughputF64Fails8Batch64.txt};
\addplot [darkspringgreen,mark=diamond*] table [x={SBFTCEightThr}, y={SBFTCEightLat}]
{Figures/ThroughputF64Fails8Batch64.txt};		

\end{axis}
\end{tikzpicture}		
\begin{tikzpicture}
\begin{axis}[LatencyGraphStyle,scaled x ticks=false,title={\scriptsize \textbf{64 failures}},width=0.29\linewidth]
\addplot [blues5,mark=square*,style={dashed}
] table [x={PBFTThr}, y={PBFTLat}]
{Figures/ThroughputF64Fails64Batch64.txt}; 
\addplot [antiquefuchsia,mark=oplus*,style={dashed}] table [x={LinearPBFTThr}, y={LinearPBFTLat}]
{Figures/ThroughputF64Fails64Batch64.txt};
\addplot [black,mark=oplus] table [x={SBFTNoMerThr}, y={SBFTNoMerLat}]
{Figures/ThroughputF64Fails64Batch64.txt}; 
\addplot [red,mark=triangle*] table [x={SBFTCZeroThr}, y={SBFTCZeroLat}]
{Figures/ThroughputF64Fails64Batch64.txt};
\addplot [darkspringgreen,mark=diamond*] table [x={SBFTCEightThr}, y={SBFTCEightLat}]
{Figures/ThroughputF64Fails64Batch64.txt};		
\end{axis}
\end{tikzpicture}	
			\end{minipage}		
			\begin{minipage}{1.0\textwidth}		
		\begin{tikzpicture}
\begin{axis}[LatencyGraphStyle,scaled x ticks=false,plotSetB, width=0.29\linewidth,ylabel = {\scriptsize \textbf{no batch}}]
\addplot [blues5,mark=square*,style={dashed}
] table [x={PBFTThr}, y={PBFTLat}]
{Figures/ThroughputF64Fails0Batch1.txt}; 
\addplot [antiquefuchsia,mark=oplus*,style={dashed}] table [x={LinearPBFTThr}, y={LinearPBFTLat}]
{Figures/ThroughputF64Fails0Batch1.txt};
\addplot [black,mark=oplus] table [x={SBFTNoMerThr}, y={SBFTNoMerLat}]
{Figures/ThroughputF64Fails0Batch1.txt}; 
\addplot [red,mark=triangle*] table [x={SBFTCZeroThr}, y={SBFTCZeroLat}]
{Figures/ThroughputF64Fails0Batch1.txt};
\addplot [darkspringgreen,mark=diamond*] table [x={SBFTCEightThr}, y={SBFTCEightLat}]
{Figures/ThroughputF64Fails0Batch1.txt};		
\end{axis}
\end{tikzpicture}		
\begin{tikzpicture}
\begin{axis}[LatencyGraphStyle,scaled x ticks=false,width=0.29\linewidth]
\addplot [blues5,mark=square*,style={dashed}
] table [x={PBFTThr}, y={PBFTLat}]
{Figures/ThroughputF64Fails8Batch1.txt}; 
\addplot [antiquefuchsia,mark=oplus*,style={dashed}] table [x={LinearPBFTThr}, y={LinearPBFTLat}]
{Figures/ThroughputF64Fails8Batch1.txt};
\addplot [black,mark=oplus] table [x={SBFTNoMerThr}, y={SBFTNoMerLat}]
{Figures/ThroughputF64Fails8Batch1.txt}; 
\addplot [red,mark=triangle*] table [x={SBFTCZeroThr}, y={SBFTCZeroLat}]
{Figures/ThroughputF64Fails8Batch1.txt};
\addplot [darkspringgreen,mark=diamond*] table [x={SBFTCEightThr}, y={SBFTCEightLat}]
{Figures/ThroughputF64Fails8Batch1.txt};		

\end{axis}
\end{tikzpicture}		
\hspace{3mm}
\begin{tikzpicture}
\begin{axis}[LatencyGraphStyle,scaled x ticks=false,width=0.29\linewidth]
\addplot [blues5,mark=square*,style={dashed}
] table [x={PBFTThr}, y={PBFTLat}]
{Figures/ThroughputF64Fails64Batch1.txt}; 
\addplot [antiquefuchsia,mark=oplus*,style={dashed}] table [x={LinearPBFTThr}, y={LinearPBFTLat}]
{Figures/ThroughputF64Fails64Batch1.txt};
\addplot [black,mark=oplus] table [x={SBFTNoMerThr}, y={SBFTNoMerLat}]
{Figures/ThroughputF64Fails64Batch1.txt}; 
\addplot [red,mark=triangle*] table [x={SBFTCZeroThr}, y={SBFTCZeroLat}]
{Figures/ThroughputF64Fails64Batch1.txt};
\addplot [darkspringgreen,mark=diamond*] table [x={SBFTCEightThr}, y={SBFTCEightLat}]
{Figures/ThroughputF64Fails64Batch1.txt};		
\end{axis}
\end{tikzpicture}		
			\end{minipage}				
		\end{center}
	\end{minipage}
	\vspace{-3mm} 
	\begin{center}
		{ \scriptsize Throughput (operations/second) }
	\end{center}
	\begin{center}
		\ref{plotSetALegened}
	\end{center}
    \vspace{-7mm}
    \caption{Latency vs Throughput.} 
    
    \label{fig:LatencyVsThroughput}	
\end{figure*}
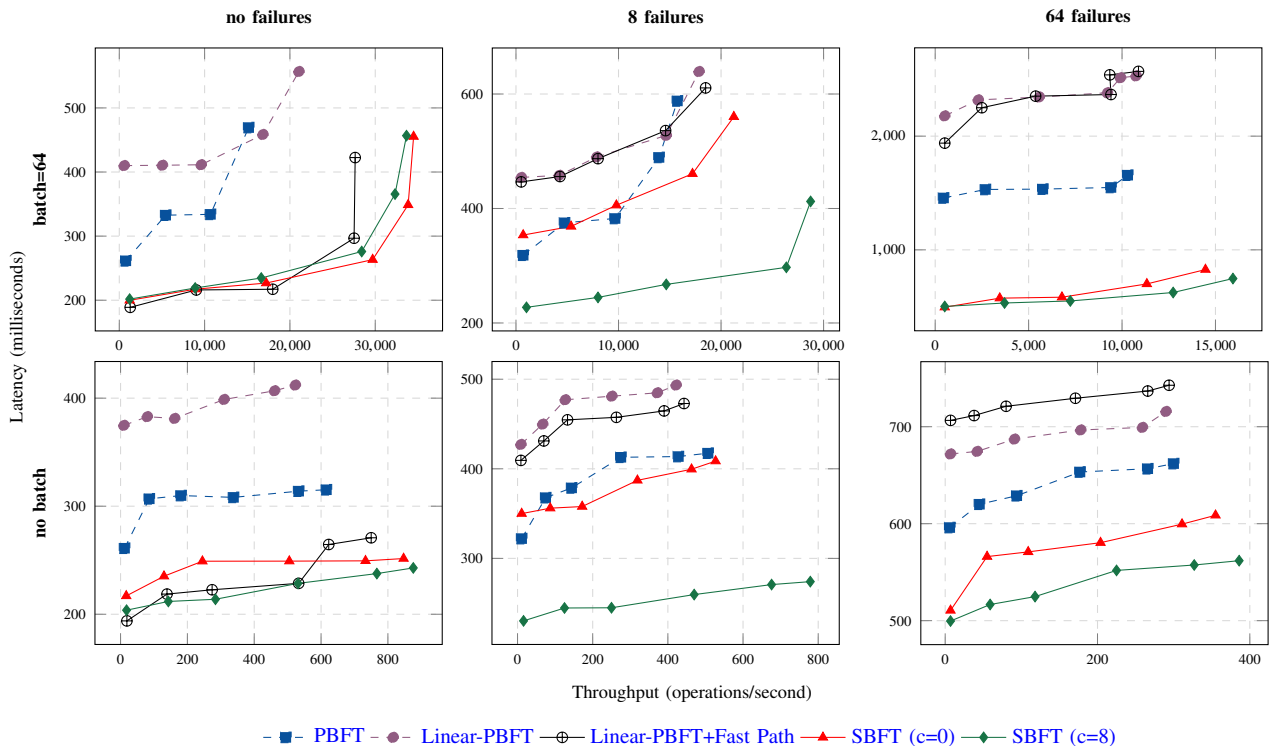

We evaluate \sysname by deploying $~200$  replicas over a wide area geo-distributed network. All experiments are configured to withstand $f=64$ Byzantine failures. %
Following, \cite{AAA09} \sysname uses public-key signed client requests and server messages.

At the time of the experiments, we could not find other BFT implementations that (1) was freely available online; and  (2)
could reliably work on a real (not simulated) world scale WAN and withstand $f=64$ failures.
The freely available code for PBFT could not scale and was not updated in the last 10 years. The code for Zyzzyva \cite{KAD09} contains a serious safety violation \cite{revisiting17} and did not contain a state transfer module.
Both BizCoin \cite{KJ16,cosi} and Omniledger \cite{omniledger}  have GitHub repositories but have only reported simulation results. Other projects like Algorand \cite{algo17} have only simulations and no open source code. Moreover the focus on these systems is on permisionless models using proof-of-work or proof-of-stake not the standard permissioned model. Comparing to these systems is left as future work (once there is a freely accessible version that is robust enough to be readily deployed and support EVM smart contracts).

Our goal is to report and compare in a real world deployment on a Wide Area Network that actually persists transactions to disk and executes real world EVM smart contracts.  
We therefore spent several months significantly improving, fixing and hardening an existing PBFT code-base in order to make it reliably work in our experimental setting. We call this implementation \textit{scale optimized PBFT}. 

We note that Souse et al \cite{HL-BFT18} use an implementation of PBFT called BFT-SMaRt \cite{BFT-SMART14}. However, it seems that the WAN deployment reported in \cite{HL-BFT18} scales to only $f \leq 3$ in a LAN and $f=1$ for WAN. Our baseline scale optimized PBFT is tuned to provide better scalability. Moreover it was not clear to us how to run EVM contracts on top of this system.

In our experiments we start with a scale optimized PBFT implementation and then show how each of the 4 ingredients improves performance as follows:
(1) linear PBFT reduces communication and improves throughput at the cost of latency; (2) adding a fast path reduces latency; (3) using cryptography to allow a single message acknowledgement improves performance when there are many clients; (4) adding redundant servers to improve resilience improves the latency-throughput trade-off.

For micro-benchmarks we run a simple Key-Value service. For our main evaluation we use real transactions from Ethereum that are executed and committed to disk (via RocksDB). We take half a Million Ethereum transactions, spanning a time of 2 months, which included $\sim 5000$ contracts created.

We compare the following replication protocols:

\noindent (1) \textbf{PBFT} (the baseline): A scale optimized implementation of PBFT.

\noindent  (2) \textbf{Linear-PBFT} (adding ingredient 1): A modification of PBFT that avoids quadratic communication by using a collector.

\noindent (3) \textbf{Fast Path + Linear PBFT} (adding ingredients 1 and 2): Linear-PBFT with an added fast-path.

\noindent (4) \textbf{\sysname with $c{=}0$} (adding ingredients 1,2, and 3): Linear-PBFT with an added fast-path and an execution collector that allows clients to receive signed message acknowledgements.

\noindent (5) \textbf{\sysname with $c{=}8$} (adding all 4 ingredients): Adding redundant servers to better adapt to network variance and failures.

\textbf{Continent scale WAN.}
In this scenario we spread the replicas and clients across 5 different regions in the same continent. In each region we use two availability zones and in each zone we deploy one machine with 32 VCPUs, Intel Broadwell E5-2686v4 processors with clock speed 2.3 GHz and connected via a 10 Gigabit network. 

We deployed more than one replica or client into a single machine. This was done due to economic and logistic constraints.
One may wonder if the fact that we packed multiple replicas into a single machine significantly modified our performance measurements. To assess this we repeated our experiments once with 10 machines (1 per availability zone, each machine had about 20 replica VMs) and then with 20 machines (2 per availability zone, each machine had about 10 replica VMs). The results of these experiments were almost the same. We conclude that the effects of communication delays between having 10 or 20 machines have marginal impact in a world scale WAN. Not surprisingly, our experiments show that in a world scale WAN, performance depends at least on the median latency and that having $10-20\%$ of replicas with a much lower latency does not modify or increase performance.  

\textbf{World scale WAN.}
In this scenario we spread the replicas and clients across 15 regions spread over all continents. In each region we deploy one machine (we also tested running two machines per region with similar results).

\textbf{Measurements}
\textit{Key-Value store benchmark}: each client sequentially sends 1000 requests. In the \textit{no batching} mode each request is a \textit{single} \textit{put} operation for writing a random value to a random key in the Key-Value store. In the \textit{batching} mode each request contains 64 operations. This models a reasonable smart contract workload. Replicas execute operations by changing their state and does a re-sync using the state transfer protocol (see Section~\ref{sec:implementation} and ~\cite{CL02}) if it falls behind. We ran these experiments on a continent scale WAN.

\textit{Smart-Contract benchmark}: we used 500,000 real transactions from Ethereum to test the \sysname ledger protocol. Replicas execute each contract by running the EVM byte-code and persisting the state on-disk. Each client sends operations by batching transactions into chunks of 12KB (on average about 50 transactions per batch). We ran these experiments on both a continent scale WAN and a world scale WAN.

\textbf{Key-Value benchmark evaluation.}
The results of the Key-Value benchmark are shown in Figures~\ref{fig:throughput} and~\ref{fig:LatencyVsThroughput}.
We first observe that compared to our scale optimized PBFT, the linear-PBFT protocol provides better throughput (2k per sec vs 1.5k per sec) when the system is under load (128 to 256 clients) with batching. Smaller effects appear also in the no batching case. We conclude that reducing the communication from quadratic to linear by using a collector significantly improves throughput at some cost in latency.

To improve latency we then add a fast path to linear-PBFT. This significantly increases throughput to 2.8k per sec. As expected, in the no failure executions the fast path seems to improve both latency and throughput, but does not help when there are failures.    

We then see that adding an execution collector that allows clients to receive just one message (instead of $f+1$) for acknowledgment significantly improves performance (latency throughput trade-off) in all scenarios (no failures and with failures). This shows that the communication from servers to clients is a significant performance bottleneck.

Finally, by parameterizing \sysname for $c=8$ we show the effect of adding redundant servers. Not surprisingly, this makes a big impact when there are $f=8$ failures. In addition, we see significant advantages in the $f=0$ and $f=64$ cases. This is probably because adding redundancy  reduces the variance and effects of slightly slow servers or staggering network links.

\textbf{Smart-Contract benchmark evaluation.}
In the continent scale WAN experiment \sysname measured 378 transaction per second with a median latency of 254 milliseconds. In the same setting our scale optimized PBFT obtained just 204 transaction per second with a median latency of 538 milliseconds. We conclude that in the content-scale \sysname simultaneously provides 2x better latency and almost 2x better throughput.

In the world scale WAN experiments \sysname obtained 172 transaction per second with a median latency of 622 milliseconds. Scale optimized PBFT obtained 98 transaction per second with a median latency of 934 milliseconds. We conclude that in a world-scale \sysname simultaneously provides almost 2x better throughput and about 1.5x better latency.
 
We note that just executing these smart contracts on a single computer (and committing the results to disk) without running any replication provides a 840 transaction per second base line. We conclude that adding a continent scale WAN 200 node replication, \sysname obtains a 2x slowdown relative to the base line. Adding a world-scale WAN 200 node replication, \sysname obtains a 5x slowdown relative to the base line.

\section{Related work}\label{sec:related}

Byzantine fault tolerance was first suggested by Lamport et al.~\cite{L82}.  Rampart~\cite{R95} was one of pioneering systems to consider Byzantine fault tolerance for state machine replication~\cite{L98}. 

\sysname is based on many advances aimed at making Byzantine fault tolerance practical. PBFT~\cite{CL99,CL02} and the extended framework of BASE~\cite{BASE01} provided many of the foundations, frameworks, optimizations and techniques on which \sysname is built. \sysname uses the conceptual approach of separating commitment from execution that is based on Yin et al.~\cite{YMVA03}. Our linear message complexity fast path is based on the techniques of Zyzzyva~\cite{K07,KAD09} and its theoretical foundations~\cite{MA06}. Our use of a hybrid model that provides better properties for $c \leq f$ failures is inspired by the parameterized model of Martin and Alvisi~\cite{MA06}. 
Up-Right~\cite{C09} studied a different model that assumes many omission failures and just a few Byzantine failures. Visigoth~\cite{P15} further advocates exploiting data center performance predictability and relative synchrony. XFT~\cite{XFT16} focuses on a model that limits the adversaries ability to control both asynchrony and malicious replicas.
In contrast, \sysname provides safety even in the fully asynchronous model when less than a third of replicas are malicious.
Prime~\cite{Prime11} adds additional pre-rounds so that clients can be guaranteed a high degree of fairness.
Our protocol provides the same type of weak fairness guarantees as in PBFT; we leave for further work the question of adding stronger fairness properties to \sysname.

A2M~\cite{A2M07}, TrInc~\cite{L09} and Veronese et al.~\cite{VCBL13} use secure hardware to obtain non-equivocation. They present a Byzantine fault tolerant replication that is safe in asynchronous models even when $n=2f+1$. CheapBFT \cite{CheapBFT12}  relies on an FPGA-based trusted subsystem to improve fault tolerance.
\sysname is a software solution and as such is bounded by  the $n\geq 3f+1$ lower bound~\cite{FLM86}.

Our use of public key cryptography (as opposed to MAC vectors) and threshold signatures follows the approach of \cite{CKS00} (also see~\cite{AAA09, ADKLDNOZ06, ADDKLNOZ10}).
We heavily rely on threshold BLS signatures~\cite{BLS04,threshold-bls}. Several recent systems mention they plan to use BLS threshold signatures \cite{cosi,randherd,coniks,omniledger}. To the best of our knowledge we are the first to deploy threshold BLS in a real system.

An alternative to the primary-backup based state machine replication approach is to use Byzantine quorum systems \cite{MR97} and make each client a proposer. This approach was taken by  QU~\cite{QU05} and HQ~\cite{HQ06} and provides very good scalability when write contention is low. \sysname follows the primary-backup paradigm that funnels multiple requests though a designated primary leader. This allows \sysname to benefit from batching which is crucial for throughput performance in large-scale multi-client scenarios.
 
Recent work is aimed at providing even better liveness guarantees. Honeybadger~\cite{HB16} is the first practical Byzantine fault tolerance replication system that leverages randomization to circumvent the FLP~\cite{FLP85} impossibility. Honeybadger and more recently BEAT \cite{BEAT18} provide liveness even when the network is fully asynchronous and controlled by an adversarial scheduler.
SPBT follows the DLS/Paxos/viewstamp-replication paradigm~\cite{DLS88,L98,LJ12} extended to Byzantine faults that guarantees liveness only when the network is synchronous.

Algorand \cite{algo17} provides a permissionless system that can support thousands of users and implements a BFT engine that chooses a random dynamic committee of roughly 2000 users. However, Algorand's scalability was only evaluated in a simulation of a wide area network. Even under best case no-failure simulation conditions, Algorand seems to provide almost 100x slower latency (60 seconds) relative to \sysname (600 milliseconds). \sysname is experimentally evaluated in a real world-scale geo-replicated wide area network deployment, executing real smart contracts, persisting their output to disk and testing scenarios with failures.

FastBFT \cite{FastBFT} shares many of the properties of \sysname. It also focuses on a linear version of PBFT and a single message client acknowledgement. FastBFT  decentralizes trust in a scalable way, but it relies on secure hardware and essentially centralizes its security assumptions by relying on the security of a single hardware vendor: Intel's SGX. FastBFT's performance is evaluated only on a local area network. \sysname assumes commodity hardware that does not rely on any single hardware vendor. \sysname is extensively evaluated in a real world-scale geo-distributed deployment.

\section{Conclusion}

We implemented \sysname, a state-of-the-art Byzantine fault tolerant replication library and experimentally validated that it provides significantly better performance for large deployments over a wide-area geo-distributed deployment. \sysname performs well when there are tens of malicious replicas, its performance advantage increases as the number of clients increases. 
 
We have learned that measuring real executions (not simulations) on hundreds of replicas over a world scale WAN and persisting real world smart contracts on disk is non-trivial and requires careful system tuning and engineering. Our experiments show that each one of our algorithmic ingredients improves the measured performance. For about two hundred replicas, the overall performance advantage show that \sysname simultaneously provides almost 2x better throughput and about 1.5x better latency relative to a highly optimized system that implements the PBFT protocol. 

We have shown that \sysname can be robustly deployed for hundreds of replicas and withstand tens of Byzantine failures. We believe that the advantage of linear protocols (like \sysname) over quadratic protocols will be even more profound at higher scales. Measuring real deployments of thousands of replicas that withstand hundreds of Byzantine failures is beyond the scope of this work.

\bibliographystyle{plain}

\end{document}